\documentclass[12pt]{article}
\usepackage[margin=1in]{geometry}
\usepackage{setspace}
\usepackage{amsmath,amssymb,amsfonts,amsthm,mathtools}
\usepackage{bm}
\usepackage{graphicx}
\usepackage{subcaption}
\usepackage{float}
\usepackage{booktabs}
\usepackage{multirow}
\usepackage{array}
\usepackage{enumitem}
\usepackage{xcolor}
\usepackage{hyperref}
\usepackage[numbers,sort&compress]{natbib}
\usepackage{url}
\usepackage{mdframed}

\hypersetup{colorlinks=true,linkcolor=blue,citecolor=blue,urlcolor=blue}
\newtheorem{theorem}{Theorem}[section]
\newtheorem{proposition}[theorem]{Proposition}

\theoremstyle{definition}

\theoremstyle{remark}
\newtheorem{remark}[theorem]{Remark}

\newcommand{\E}{\mathbb{E}}
\newcommand{\Pbb}{\mathbb{P}}

\newcommand{\F}{\mathcal{F}}
\newcommand{\G}{\mathcal{G}}
\newcommand{\A}{\mathcal{A}}
\newcommand{\given}{\,|\,}
\newcommand{\ind}{\mathbf{1}}
\newcommand{\argmin}{\operatorname*{arg\,min}}

\newmdenv[backgroundcolor=blue!4,linecolor=blue!30,linewidth=1pt,
          roundcorner=4pt,innertopmargin=7pt,innerbottommargin=7pt,
          skipabove=8pt,skipbelow=8pt]{clinbox}

\title{\textbf{Optimal Stopping in Sequential Clinical Prediction}}

\author{Foo Hui-Mean~ and ~Yuan-chin Ivan Chang\\
\small Institute of Statistical Science, Academia Sinica, Taipei, Taiwan\\
\small \texttt{ycchang@as.edu.tw}}

\date{\today}

\begin{document}
\maketitle

\begin{abstract}
Most clinical prediction studies are developed from retrospective cohorts and
reported as if all patient information were observed at once.  In practice,
clinicians face a more consequential question: \emph{when is there already
enough information to stop testing and act?}  A later stage can produce a
better-looking model and still fail to justify the added delay, burden, or
invasiveness of further workup.

We formulate sequential clinical prediction as an \emph{optimal-stopping}
problem under staged information, and illustrate the framework across four
retrospective clinical datasets.  The preferred stopping stage differed
substantially by setting: sometimes fuller information justified waiting,
whereas in other cases early or intermediate action was preferable.  The key
object is the patient-specific conditional risk trajectory: forward martingale
structure represents coherent risk updating across stages, while
reverse-martingale ideas describe information loss when a richer predictor is
replaced by a simpler score.  The results demonstrate that the best-performing
model is not always the best stage for clinical decision-making.

\end{abstract}

\noindent\textbf{Keywords:} optimal stopping; sequential clinical prediction; staged testing; risk updating; decision analysis; information loss; cancer; heart disease; diabetes; ICU mortality; eICU; Statistics in Medicine.


\section{Introduction}
\label{sec:intro}

Most clinical classification models are built from retrospective,
population-based data and summarized in a cross-sectional way: one feature set,
one fitted model, and one set of performance measures.  That tradition is
useful, but it does not match how medical decisions are actually made.  In
practice, a clinician begins with information that is immediately available,
adds tests in stages, and repeatedly asks a practical question that is usually
left outside standard prediction papers: \emph{should we stop now, or is the
next test worth waiting for?}

For many medical applications, that stopping decision is the central scientific
question.  A later stage can have a higher AUC and still fail to be the best
stage at which to act.  Conversely, an early stage may already support a
reasonable decision even when more information could still improve prediction
slightly.  The clinically relevant issue is therefore not simply whether
prediction improves, but whether the improvement is large enough to justify the
added burden, delay, invasiveness, or resource use of further assessment.

This paper studies staged clinical prediction from that perspective.  The
organizing viewpoint is \emph{optimal stopping}: sequential prediction is
useful because it supports a decision about when enough information has been
collected.  The statistical machinery explains how to formalize that question.
In particular, the paper is not simply comparing a sequence of population
models with increasingly large feature sets.  The stage-specific models are
indeed estimated from retrospective cohorts, but the scientific object of
interest is the \emph{individual patient's conditional risk trajectory} across
staged information.  In other words, the framework asks how the risk for the
patient currently under evaluation should update as the workup unfolds, and
whether that update is large enough to justify moving to the next stage.

That distinction deserves to be explicit because otherwise the empirical
results could be dismissed as the trivial claim that later feature sets often
predict better than earlier ones.  What is nontrivial here is the linkage
between subject-specific risk updating and the decision of whether to continue
testing.  Forward martingale structure describes whether that patient-level risk
updating is coherent as information accumulates.  Reverse-martingale structure
addresses a different question: what is lost when richer information is
deliberately compressed into a simpler score or earlier-stage summary.  The
clinical purpose is then to determine the preferred stopping stage under stated
loss and testing-cost assumptions.

The martingale formulation of Bayesian updating \citep{doob1953,williams1991}
has rarely been used as an explicit organizing principle for clinical
prediction; related ideas appear in Bayesian clinical monitoring
\citep{berry1985,spiegelhalter1994} and sequential analysis
\citep{wald1945,armitage1975,siegmund1985}, although those settings focus on
trial monitoring rather than individual patient workups.  Decision curve
analysis \citep{vickers2006} and proper scoring rules \citep{gneiting2007}
provide closely related tools for threshold-based evaluation; our aim is to
place risk updating, score simplification, and decision timing within a common
staged framework.  The same stopping logic extends naturally to
\emph{treatment selection}---when the clinician must choose among therapies,
the question becomes whether one more stage changes the preferred treatment
enough to justify its burden---a connection discussed in the appendix.

Against this background, the paper pursues four related aims.  First, it
formalizes the stopping decision via a Bellman recursion, so that an additional
test is ordered only when its expected reduction in decision loss exceeds its
added burden.  Second, it represents the patient's evolving risk as a
martingale under increasing clinical information, giving a principled benchmark
for whether stagewise risk estimates are updating coherently.  Third, it
quantifies the information lost when a richer stage-specific predictor is
compressed into a simpler clinical score, using a projection-loss
decomposition linked to the reverse-martingale structure.  Fourth, it
illustrates all three ideas across four retrospective medical studies---Breast
Cancer Wisconsin, Cleveland Heart Disease, Pima Diabetes, and the eICU
Collaborative Research Database Demo---representing a range of clinical
settings and stopping conclusions.

The remainder of the paper is organized as follows.
Section~\ref{sec:methodology} develops the methodology: the statistical
framework (martingale and reverse-martingale structure, Bayesian decision rule,
Bellman recursion), the empirical estimands and patient-level bridge
diagnostics, and the datasets and implementation.
Section~\ref{sec:results} presents the numerical results for all four studies.
Sections~\ref{sec:discussion} and~\ref{sec:conclusion} discuss the findings,
address limitations, and summarize the main messages.  Proofs of the main
theoretical results are collected in Appendix~\ref{app:proofs} at the end of
this paper; appendix study results, sensitivity analyses, calibration
tables, and code details are provided in the appendix.

\section{Methodology}
\label{sec:methodology}

\subsection{Statistical Framework}
\label{sec:framework}

We view each patient workup as a sequence of clinically ordered information
states.  At stage $t$, the clinician has observed everything in $\F_t$ and
forms a current estimate of disease risk.  The notation below provides the
formal version of that idea.

\subsubsection{Clinical state, filtration, and posterior belief}
\label{sec:setup}

Let $(\Omega,\mathcal{F},\Pbb)$ be a probability space, and let $D\in\{0,1\}$
denote a latent binary outcome (disease present, or adverse event).  Let
$(\F_t)_{t\ge 0}$ be an increasing filtration, $\F_0\subseteq\F_1\subseteq
\cdots$, where $\F_t$ represents all clinical information available by stage
$t$ (history, examination, laboratory results, imaging, etc.).  The key object
is the posterior risk process
\begin{equation}
    X_t := \E[D \given \F_t] = \Pbb(D=1 \given \F_t).
    \label{eq:posterior}
\end{equation}
This is the clinician's current belief that the patient has the target
condition, given all evidence accumulated through stage $t$.
Table~\ref{tab:notation} summarizes the main mathematical objects and their
clinical interpretations.

\begin{table}[htbp]
\centering
\caption{Main mathematical objects and their clinical interpretations.}
\label{tab:notation}
\renewcommand{\arraystretch}{1.25}
\begin{tabular}{p{2.8cm}p{4.6cm}p{5.0cm}}
\toprule
\textbf{Object} & \textbf{Statistical meaning} & \textbf{Clinical meaning} \\
\midrule
$D\in\{0,1\}$ & binary outcome & disease / adverse-event status \\
$\F_t$ & increasing $\sigma$-algebra & accumulated clinical information \\
$X_t = \E[D|\F_t]$ & conditional risk (martingale) & updated posterior belief \\
$\G_t$ & decreasing $\sigma$-algebra & coarsened information / score layer \\
$Y_t = \E[D|\G_t]$ & projected risk (reverse mart.) & risk-score-based belief \\
$a_t^*$ & Bayes-optimal action & treat / do not treat \\
$c^*$ & decision threshold & $c_{\mathrm{FP}}/(c_{\mathrm{FP}}+c_{\mathrm{FN}})$ \\
$\tau^*$ & optimal stopping time & first stage where acting is optimal \\
\bottomrule
\end{tabular}
\end{table}

\subsubsection{Coherent risk updating under increasing information}
\label{sec:martingale}

\begin{proposition}[Posterior risk is a martingale]
\label{prop:martingale}
Assume $D\in L^1(\Pbb)$.  The process $(X_t,\F_t)_{t\ge 0}$ is a
martingale:
\(
    \E[X_{t+1} \given \F_t] \;=\; X_t \quad \text{a.s.}
\)
If $D\in[0,1]$, then $(X_t)$ is bounded and converges almost surely and in
$L^1$ to $X_\infty = \E[D\given\F_\infty]$, where $\F_\infty =
\bigvee_{t\ge 0}\F_t$.
\end{proposition}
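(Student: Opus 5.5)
The plan is to verify the martingale property directly from the definition of $X_t$ as a conditional expectation, and then to obtain convergence from the standard martingale convergence theorem together with Lévy's upward theorem.

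\textbf{Martingale property.} The argument has three parts.

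\emph{Integrability.} Conditional Jensen gives $|X_t| \le \E[|D| \given \F_t]$, hence $\E|X_t| \le \E|D| < \infty$.

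\emph{Adaptedness.} $X_t$ is $\F_t$-measurable by the definition of conditional expectation.

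\emph{The identity.} Since $\F_t\subseteq\F_{t+1}$, the tower property gives
\[
\E[X_{t+1}\given\F_t]=\E\bigl[\E[D\given\F_{t+1}]\given\F_t\bigr]=\E[D\given\F_t]=X_t \quad \text{a.s.}
\]
This settles the first claim. It only requires $D\in L^1$ and the nestedness of the filtration.

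\textbf{Convergence.} Now assume $D\in[0,1]$. Monotonicity of conditional expectation gives $0\le X_t\le 1$ a.s. The process is therefore uniformly bounded, so it is bounded in $L^1$ and uniformly integrable.

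\emph{Existence of the limit.} Doob's martingale convergence theorem (via the upcrossing inequality) gives an almost sure limit $X_\infty$. Uniform integrability upgrades this to $L^1$ convergence. Alternatively, bounded convergence gives the $L^1$ statement directly from the a.s. one.

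\emph{Identifying the limit.} It remains to show $X_\infty=\E[D\given\F_\infty]$; this is Lévy's upward theorem, which I would prove as follows. Put $Z:=\E[D\given\F_\infty]$ and fix $A\in\F_s$.
\begin{itemize}
\item For every $t\ge s$, $\E[X_t\ind_A]=\E[D\ind_A]=\E[Z\ind_A]$.
\item Letting $t\to\infty$ and using $L^1$ convergence gives $\E[X_\infty\ind_A]=\E[Z\ind_A]$.
\end{itemize}
Thus the finite signed measures $A\mapsto\E[(X_\infty-Z)\ind_A]$ vanish on $\bigcup_s\F_s$. This class is a $\pi$-system generating $\F_\infty$, so by the $\pi$--$\lambda$ (Dynkin) theorem they vanish on all of $\F_\infty$.

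Both $X_\infty$ (as an a.s. limit of $\F_t$-measurable variables, after a null-set modification) and $Z$ are $\F_\infty$-measurable. Hence $X_\infty=Z$ a.s.

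\textbf{Main obstacle.} The only genuinely delicate step is this identification of the limit. The points needing care are:
\begin{itemize}
\item passing the limit inside the integral, where $L^1$ convergence is essential;
\item extending equality from the generating $\pi$-system to $\F_\infty$;
\item ensuring $X_\infty$ is $\F_\infty$-measurable, defined e.g.\ as $\limsup_t X_t$.
\end{itemize}
Everything else is routine. Since this is a textbook result \citep{doob1953,williams1991}, the proof may simply cite these theorems after verifying their hypotheses.
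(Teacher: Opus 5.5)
Your proof is correct and follows the paper's route: the tower property gives the martingale identity, boundedness gives uniform integrability, Doob's convergence theorem gives a.s.\ and $L^1$ convergence, and L\'{e}vy's upward theorem identifies the limit as $\E[D\given\F_\infty]$. The only difference is that you prove L\'{e}vy's upward theorem directly with the $\pi$--$\lambda$ argument, whereas the paper simply cites it from \citet{williams1991}.
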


\begin{proof}[Proof sketch]
This is an immediate consequence of iterated expectation and standard
martingale convergence results; a full proof is given in
Appendix~\ref{proof:martingale}.
\end{proof}

\begin{remark}
{(Clinical interpretation.)}  The martingale property means there is no
systematic tendency for belief to drift further in the same direction after a
test result is incorporated.  It does not depend on the model class used to
estimate $X_t$.
\end{remark}

\begin{remark}
The martingale property is an idealization.  In practice, estimated risks
$\hat{X}_t$ are not exact conditional expectations; Section~\ref{sec:bridge}
describes how to quantify and diagnose deviations.
\end{remark}

\subsubsection{Information loss when rich data are reduced to scores}
\label{sec:revmart}

Let $(\G_t)_{t\ge 0}$ be a \emph{decreasing} filtration,
$\G_0\supseteq\G_1\supseteq\cdots$, representing the coarsened information
encoded in a clinical risk score $S_t$ (e.g., SOFA score, CURB-65, Framingham
risk score) so that $\G_t=\sigma(S_t)$.  We assume throughout that
$\G_t\subseteq\F_t$ for all $t$: the score at each stage is a coarsening of
the full clinical information available at that stage.  Set
\begin{equation}
    Y_t := \E[D \given \G_t].
    \label{eq:projected}
\end{equation}

\begin{proposition}[Projected risk is a reverse martingale]
\label{prop:revmart}
Assume $D\in L^1$.  Then $(Y_t,\G_t)_{t\ge 0}$ satisfies
$\E[Y_t\given\G_{t+1}]=Y_{t+1}$ a.s., so it is a reverse martingale.  It
converges a.s.\ and in $L^1$ to $Y_\infty=\E[D\given\G_\infty]$, where
$\G_\infty=\bigcap_{t\ge 0}\G_t$.
\end{proposition}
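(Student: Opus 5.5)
The plan is to prove the two claims separately: the reverse-martingale identity by the tower property, and convergence by the backward (Lévy downward) martingale convergence theorem. Throughout, $D\in L^1$ guarantees that each $Y_t=\E[D\given\G_t]$ is well defined and integrable. Each $Y_t$ is $\G_t$-measurable by construction, so $(Y_t)$ is adapted to the decreasing family $(\G_t)$.

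\textbf{Step 1 (reverse-martingale identity).} Because $\G_{t+1}\subseteq\G_t$, the tower property of conditional expectation gives
\[
\E[Y_t\given\G_{t+1}]=\E\bigl[\E[D\given\G_t]\given\G_{t+1}\bigr]=\E[D\given\G_{t+1}]=Y_{t+1}\quad\text{a.s.}
\]
Reindexing by $n=-t$ turns $(\G_t)$ into an increasing filtration indexed by the non-positive integers. Under that reindexing, $(Y_{-n},\G_{-n})_{n\le 0}$ is an ordinary martingale indexed by $\{\dots,-2,-1,0\}$, which is the standard definition of a reverse martingale. This step is routine. Note that the standing assumption $\G_t\subseteq\F_t$ is not needed here; only the monotonicity of $(\G_t)$ matters.

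\textbf{Step 2 (convergence).} First, the family $\{Y_t\}=\{\E[D\given\G_t]\}$ consists of conditional expectations of a single integrable variable, so it is uniformly integrable. Second, for almost sure convergence I will apply Doob's upcrossing inequality to the finite martingale $(Y_T,Y_{T-1},\dots,Y_0)$, which runs forward in the reversed time. The expected number of upcrossings of any rational interval $[a,b]$ is bounded by $\E|Y_0-a|^+/(b-a)\le(\E|D|+|a|)/(b-a)$, uniformly in $T$. Letting $T\to\infty$ and taking a countable union over rational pairs shows that $\lim_t Y_t$ exists a.s. in $[-\infty,\infty]$. Fatou's lemma together with $\sup_t\E|Y_t|\le\E|D|$ then shows the limit is finite a.s. Finally, uniform integrability upgrades almost sure convergence to $L^1$ convergence by Vitali's theorem.

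\textbf{Step 3 (identifying the limit).} Call the limit $Y'$. For every $s$, the variables $Y_t$ with $t\ge s$ are $\G_s$-measurable, so $Y'$ is $\G_s$-measurable for all $s$, hence $\G_\infty$-measurable. Now fix $A\in\G_\infty$. Since $A\in\G_t$ for every $t$, we have $\E[Y_t\ind_A]=\E[D\ind_A]$ for every $t$, and $L^1$ convergence lets us pass to the limit to get $\E[Y'\ind_A]=\E[D\ind_A]$. Therefore $Y'=\E[D\given\G_\infty]=Y_\infty$ a.s.

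The main obstacle is the almost sure convergence in Step 2, because the forward martingale convergence theorem cannot be cited verbatim for a decreasing filtration. If it is acceptable, I would simply invoke Lévy's downward theorem (e.g.\ \citep{williams1991}). Otherwise I would carry out the upcrossing argument sketched above, taking care that the upcrossing bound is uniform in the horizon $T$ and depends only on $Y_0$; this uniformity is what makes the reverse case actually easier than the forward case.
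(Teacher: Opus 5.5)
Your proposal is correct and follows essentially the same route as the paper: the tower property gives $\E[Y_t\given\G_{t+1}]=Y_{t+1}$, and the L\'evy--Doob downward (reverse martingale) convergence theorem gives convergence, which the paper simply cites and which your upcrossing, Fatou, Vitali and limit-identification steps reprove in the standard way. One minor difference: the paper gets uniform integrability from boundedness via $D\in[0,1]$, whereas you derive it from the uniform integrability of conditional expectations of a single integrable variable, which matches the stated hypothesis $D\in L^1$ more exactly.
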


\begin{proof}[Proof sketch]
The result follows from iterated expectation under a decreasing information
sequence; a full proof is given in Appendix~\ref{proof:revmart}.
\end{proof}

\begin{proposition}[Projection-loss decomposition]
\label{prop:projection_loss}
Assume $D\in L^2$ (which holds automatically since $D\in\{0,1\}$) and
$\G_t\subseteq\F_t$.  Then
\[
    \E[(D-Y_t)^2] - \E[(D-X_t)^2] \;=\; \E[(X_t-Y_t)^2] \;\ge\; 0.
\]
\end{proposition}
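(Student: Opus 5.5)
The plan is to expand the square around the intermediate quantity $X_t$ and then show that the cross term vanishes by the tower property. Writing $D-Y_t=(D-X_t)+(X_t-Y_t)$, we get
\[
\E[(D-Y_t)^2]=\E[(D-X_t)^2]+2\,\E[(D-X_t)(X_t-Y_t)]+\E[(X_t-Y_t)^2].
\]
All three terms are finite. Since $D\in\{0,1\}$, both $X_t=\E[D\given\F_t]$ and $Y_t=\E[D\given\G_t]$ take values in $[0,1]$ almost surely, so every factor is bounded and lies in $L^2$.

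The key step is to show that the cross term is zero. By the standing assumption $\G_t\subseteq\F_t$, the variable $Y_t$ is $\G_t$-measurable and hence $\F_t$-measurable. Also $X_t$ is $\F_t$-measurable by definition. So $Z:=X_t-Y_t$ is a bounded $\F_t$-measurable random variable. Conditioning on $\F_t$ and pulling out what is known gives
\[
\E[(D-X_t)Z]=\E\bigl[Z\,\E[D-X_t\given\F_t]\bigr]=\E\bigl[Z\,(X_t-X_t)\bigr]=0.
\]
Equivalently, $X_t$ is the $L^2$-orthogonal projection of $D$ onto $L^2(\F_t)$, and $Y_t\in L^2(\G_t)\subseteq L^2(\F_t)$, so this is just the Pythagorean identity.

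Substituting back yields the claimed identity. Nonnegativity follows because $\E[(X_t-Y_t)^2]$ is the expectation of a square. One could also note that $Y_t=\E[X_t\given\G_t]$ by the tower property, so the gap equals $\E[\mathrm{Var}(X_t\given\G_t)]$, but this is not needed.

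The only point needing care is the measurability step: $Y_t$ must be $\F_t$-measurable, which is exactly where the hypothesis $\G_t\subseteq\F_t$ enters. The integrability needed to justify conditioning inside the expectation is immediate from boundedness. I expect no real obstacle beyond stating these two facts explicitly.
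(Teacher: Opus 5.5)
Your proof is correct and follows essentially the same route as the paper: both write $D-Y_t=(D-X_t)+(X_t-Y_t)$ and kill the cross term by conditioning on $\F_t$, using $\G_t\subseteq\F_t$ so that $X_t-Y_t$ is $\F_t$-measurable (the paper also phrases this as $L^2$ orthogonality). The only minor difference is that you justify integrability by boundedness of $D\in\{0,1\}$, whereas the paper uses Jensen to get $X_t,Y_t\in L^2$, which covers the general $D\in L^2$ hypothesis as stated.
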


\begin{proof}[Proof sketch]
This is the standard $L^2$ projection identity for conditional expectation; a
full proof is given in Appendix~\ref{proof:projection}.
\end{proof}

The quantity $\E[(X_t-Y_t)^2]$ is the \emph{projection loss}: the mean
squared error incurred by summarising the full posterior through the score $S_t$.

\subsubsection{Treatment as a Bayes decision}
\label{sec:decision}

Let $\A=\{0,1\}$ (treat/not treat) with asymmetric costs: $L(1,0)=
c_{\mathrm{FP}}$ (cost of treating a healthy patient) and $L(0,1)=
c_{\mathrm{FN}}$ (cost of missing a diseased patient), with zero loss for
correct decisions.

\begin{proposition}[Threshold Bayes rule]
\label{prop:threshold}
The Bayes-optimal action at stage $t$ is
\[
    a_t^* = \ind\{X_t > c^*\}, \qquad
    c^* = \frac{c_{\mathrm{FP}}}{c_{\mathrm{FP}} + c_{\mathrm{FN}}}.
\]
\end{proposition}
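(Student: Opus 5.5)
The plan is to compute the posterior expected loss of each action given $\F_t$ and then choose the action with the smaller value, pointwise in $\omega$. First I need to fix the class of admissible actions at stage $t$: $\F_t$-measurable random variables $a$ taking values in $\A=\{0,1\}$. The Bayes risk $\E[L(a,D)]$ then equals $\E\bigl[\E[L(a,D)\given\F_t]\bigr]$ by iterated expectation, as in Proposition~\ref{prop:martingale}. So it suffices to minimise the conditional risk $\E[L(a,D)\given\F_t]$ almost surely.

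The key step is the conditional risk computation. Since $a$ is $\F_t$-measurable and $L(a,D)=c_{\mathrm{FP}}\,a(1-D)+c_{\mathrm{FN}}(1-a)D$, pulling out what is known gives
\[
\E[L(a,D)\given\F_t] \;=\; a\,c_{\mathrm{FP}}(1-X_t) + (1-a)\,c_{\mathrm{FN}}X_t .
\]
The right-hand side is affine in $a$. Action $1$ is therefore strictly preferred exactly when $c_{\mathrm{FP}}(1-X_t)<c_{\mathrm{FN}}X_t$. Assuming $c_{\mathrm{FP}}+c_{\mathrm{FN}}>0$, rearranging gives $X_t>c_{\mathrm{FP}}/(c_{\mathrm{FP}}+c_{\mathrm{FN}})=c^*$.

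With this in hand, I would argue as follows.
\begin{enumerate}
\item The rule $a_t^*=\ind\{X_t>c^*\}$ attains the pointwise minimum of the two conditional risks almost surely.
\item Because $X_t$ is $\F_t$-measurable, $a_t^*$ is admissible.
\item For any admissible $a$, $\E[L(a,D)\given\F_t]\ge \E[L(a_t^*,D)\given\F_t]$ almost surely.
\item Taking expectations of this inequality gives optimality of $a_t^*$.
\end{enumerate}

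There is no substantive obstacle here. The two points that need care are stated explicitly rather than glossed over:
\begin{itemize}
\item The tie set $\{X_t=c^*\}$. On this set both actions have equal conditional risk, so the strict inequality in the stated rule is one Bayes-optimal choice among several. Uniqueness holds only up to this set.
\item The measurability of the candidate rules. Because admissible actions are $\F_t$-measurable, the conditional risk can be written as an affine function of $a$, and the pointwise minimisation yields a global minimiser of the Bayes risk.
\end{itemize}
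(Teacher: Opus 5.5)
Your proposal is correct and follows the paper's proof: both compute the conditional expected losses $c_{\mathrm{FP}}(1-X_t)$ and $c_{\mathrm{FN}}X_t$ given $\F_t$, compare them to get the threshold $c^*$, and note that either action is optimal on the tie set $\{X_t=c^*\}$. The paper leaves implicit the restriction to $\F_t$-measurable rules and the step from minimising the conditional risk pointwise to minimising the Bayes risk, so your version is slightly more complete but not a different route.
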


\begin{proof}[Proof sketch]
Compare the expected loss of treating with the expected loss of not treating;
the threshold expression follows directly.  A full proof is given in
Appendix~\ref{proof:threshold}.
\end{proof}

\begin{remark}[Threshold and net benefit]
The threshold $c^*$ is the risk level above which treatment becomes preferable
under the stated loss ratio, equivalent to the treatment threshold in decision
curve analysis \citep{vickers2006}.
\end{remark}

\subsubsection{Optimal stopping: when to act}
\label{sec:stopping}

The clinician faces a sequential decision: at each stage $t$, either act
immediately using current belief $X_t$, or pay a test cost $c_t>0$ and wait
until stage $t+1$.  Let $V_t$ denote the minimum attainable total expected
loss from stage $t$ onward.

\begin{theorem}[Dynamic programming for optimal clinical stopping]
\label{thm:dp}
Let $T$ be the finite horizon (total number of stages).  Set $V_T=\ell(X_T)$
as the boundary condition, and for $t=T-1,\ldots,0$ define recursively
\begin{equation}
    V_t = \min\!\left\{
        \underbrace{\inf_{a\in\A}\E[L(a,D)\given\F_t]}_{\text{cost of acting at }t},\;
        \underbrace{c_t + \E[V_{t+1}\given\F_t]}_{\text{cost of continuing}}
    \right\}.
    \label{eq:bellman}
\end{equation}
The optimal stopping time is $\tau^*=\min\{t : V_t = \inf_{a}\E[L(a,D)|\F_t]\}$.
Under the threshold rule of Proposition~\ref{prop:threshold}, the acting cost at
stage $t$ equals
\[
    \ell(X_t) := c_{\mathrm{FN}}\,X_t\,\ind\{X_t\le c^*\}
               + c_{\mathrm{FP}}\,(1-X_t)\,\ind\{X_t > c^*\}.
\]
(When $X_t\le c^*$ the decision is not to treat, incurring expected false-negative
cost $c_{\mathrm{FN}}\,X_t$; when $X_t>c^*$ the decision is to treat, incurring
expected false-positive cost $c_{\mathrm{FP}}(1-X_t)$.)
Stopping is optimal at stage $t$ whenever $\ell(X_t)\le c_t+\E[V_{t+1}|\F_t]$.
\end{theorem}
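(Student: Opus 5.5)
The statement is really three assertions, and I would take them in order: the formula for the acting cost, the optimality of the recursion \eqref{eq:bellman}, and the optimality of $\tau^*$. The first is just Proposition~\ref{prop:threshold} made explicit. Since $D\in\{0,1\}$ and correct decisions incur zero loss,
\[
\E[L(0,D)\given\F_t]=c_{\mathrm{FN}}X_t, \qquad \E[L(1,D)\given\F_t]=c_{\mathrm{FP}}(1-X_t).
\]
The infimum over $\A=\{0,1\}$ is therefore $\min\{c_{\mathrm{FN}}X_t,\;c_{\mathrm{FP}}(1-X_t)\}$. The first term is the smaller one exactly when $X_t\le c^*$. This gives $\ell(X_t)$, and in particular the terminal condition $V_T=\ell(X_T)$ is consistent with the acting cost at $T$.

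For the recursion I would follow the standard finite-horizon Snell envelope argument. Let $\mathcal{T}_t$ denote the stopping times $\tau$ with $t\le\tau\le T$, where stopping at $\tau$ means acting optimally on $\F_\tau$. The total cost of such a policy is
\[
J(\tau)=\sum_{s=t}^{\tau-1}c_s+\ell(X_\tau).
\]
I would show by backward induction that $V_t=\operatorname*{ess\,inf}_{\tau\in\mathcal{T}_t}\E[J(\tau)\given\F_t]$.

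\textbf{Base case} ($t=T$). This is trivial.

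\textbf{Lower bound.} Fix $\tau\in\mathcal{T}_t$. On the event $\{\tau=t\}$, which lies in $\F_t$, the cost is $\ell(X_t)$. On $\{\tau>t\}$, which also lies in $\F_t$, the tower property gives
\[
\E[J(\tau)\given\F_t]=c_t+\E\bigl[\E[J'(\tau)\given\F_{t+1}]\given\F_t\bigr],
\]
where $J'$ is the cost from stage $t+1$ onward. The induction hypothesis bounds the inner term below by $V_{t+1}$. Combining the two events gives $\E[J(\tau)\given\F_t]\ge V_t$.

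\textbf{Attainment.} Take $\tau_t=\min\{s\ge t: V_s=\ell(X_s)\}$. This is well defined because $V_T=\ell(X_T)$. It is a stopping time because each $V_s$ is $\F_s$-measurable, which is itself shown by induction. Along $\{\tau_t>t\}$ the minimum in \eqref{eq:bellman} is achieved by the continuation term, so equality propagates and $\E[J(\tau_t)\given\F_t]=V_t$.

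Taking $t=0$ shows that $\tau^*=\tau_0$ is optimal. The final sentence of the theorem is then immediate: $V_t=\ell(X_t)$ precisely when $\ell(X_t)\le c_t+\E[V_{t+1}\given\F_t]$.

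I expect the main obstacle to be the measure-theoretic bookkeeping rather than any computation. This means handling the essential infimum over an uncountable family of stopping times, and checking integrability. Integrability is easy because $0\le\ell\le\max(c_{\mathrm{FP}},c_{\mathrm{FN}})$ and the $c_t$ are finite constants. To avoid the essential infimum, I would prove only the two inequalities "$\E[J(\tau)]\ge\E[V_0]$ for every $\tau$" and "$\E[J(\tau^*)]=V_0$" (with $\F_0$ trivial, or in conditional form). These suffice for optimality and need only the tower property and the $\F_t$-measurability of the events $\{\tau=t\}$.
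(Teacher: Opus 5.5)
Your proposal is correct and follows the paper's proof. Both argue by backward induction: the bound $\E[J(\tau)\given\F_t]\ge V_t$ comes from splitting on $\{\tau=t\}$ and $\{\tau>t\}$, and it is attained by the first time $V_t=\ell(X_t)$, because before then the continuation term achieves the minimum in \eqref{eq:bellman}. Your version is slightly more careful in two places: it states attainment in conditional form from every stage $t$ rather than only at $t=0$, and it checks explicitly that $\inf_{a\in\A}\E[L(a,D)\given\F_t]=\ell(X_t)$. The one bookkeeping step to write out is that, on $\{\tau>t\}$, the induction hypothesis should be applied to the stopping time $\tau\vee(t+1)\in\mathcal{T}_{t+1}$.
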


\begin{proof}[Proof sketch]
This is a standard Bellman recursion for a finite-horizon stopping problem;
see \citet{chow1971os}.  A fuller derivation is provided in
Appendix~\ref{proof:dp}.
\end{proof}

\begin{remark}[Retrospective approximation of the Bellman value]
\label{rem:dp_prospective}
In prospective clinical use, $V_{t+1}$ in~\eqref{eq:bellman} is unknown at
decision time $t$, as it depends on future observations not yet available.
In the retrospective studies below, we approximate the continuation cost
$c_t + \E[V_{t+1}|\F_t]$ by the observed total expected loss at stage $t+1$,
evaluated on the held-out test set.  This proxy treats the next stage as the
final stage, so it is conservative: it underestimates the true continuation
value by ignoring the possibility of early stopping at even later stages
($t+2, t+3,\ldots$).  The reported total cost at each stage is therefore an
upper bound on the true Bellman value, making our stopping recommendations
conservative.
\end{remark}

\begin{remark}
\textbf{Clinical interpretation.}  The physician should order an additional
test only when the expected reduction in decision loss---i.e., the expected
improvement in diagnostic accuracy---exceeds the cost of the test (time,
patient discomfort, resources).  When this condition fails, the optimal
policy is to act on current evidence.
\end{remark}

\subsection{Empirical Estimands and Diagnostics}
\label{sec:bridge}

In retrospective clinical data, the true conditional expectations $X_t$ and
$Y_t$ are not directly observed but are estimated by stagewise prediction
models.  Let $\hat{X}_t = f_t(\F_t)$ denote a calibrated risk estimate at
stage $t$ and $\hat{Y}_t = g_t(\G_t)$ a score-based estimate.  The empirical
goal is not merely to show that later models can look better.  Rather, it is
to assess whether $(\hat{X}_t)$ behaves approximately martingale-like as a
patient-level risk trajectory, to quantify projection loss
$\E[(\hat{X}_t - \hat{Y}_t)^2]$, and to identify the stage at which it is
preferable to stop testing and act.

\begin{remark}[Three-part conceptual separation]
\label{rem:separation}
It is useful to separate three distinct operations that are sometimes
conflated in sequential prediction studies:
\begin{enumerate}[label=\textbf{(\roman*)},leftmargin=2.2em,itemsep=1pt]
\item \textbf{Population learning} --- estimating stage-specific models
  $\hat{X}_t$ from retrospective cohort data.
\item \textbf{Individual updating} --- applying those models to track the
  conditional risk for the \emph{patient currently under evaluation} as
  additional information arrives.
\item \textbf{Decision timing} --- determining, for that patient, whether one
  more stage of testing is expected to reduce decision loss by more than its
  added burden, delay, or invasiveness.
\end{enumerate}
Forward martingale structure pertains to levels (ii) and (iii): it
characterizes how the individual patient's risk estimate should update
coherently as new evidence arrives, and it supports the stopping decision.
Reverse-martingale structure describes a separate operation at level (i):
what is lost when a richer stage-$t$ predictor is deliberately compressed
into a simpler clinical score or earlier-stage summary.  Population-level
metrics (AUC, Brier score) summarize level (i) and supply the raw materials
for levels (ii) and (iii), but they do not by themselves answer the stopping
question at level (iii).
\end{remark}

\begin{proposition}[Decision regret under posterior compression]
\label{prop:regret}
Suppose the decision loss $\ell(\cdot)$ is Lipschitz with constant $L$.
Then
\[
    \E[\ell(\hat{Y}_t)] - \E[\ell(\hat{X}_t)] \;\le\; L\,\E[|\hat{X}_t - \hat{Y}_t|].
\]
\end{proposition}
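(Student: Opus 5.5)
The plan is to bound the difference pointwise and then take expectations. For each $\omega$, the Lipschitz assumption on $\ell$ gives
\[
\ell(\hat{Y}_t(\omega)) - \ell(\hat{X}_t(\omega)) \le \bigl|\ell(\hat{Y}_t(\omega)) - \ell(\hat{X}_t(\omega))\bigr| \le L\,\bigl|\hat{Y}_t(\omega) - \hat{X}_t(\omega)\bigr|.
\]
Taking expectations and using linearity and monotonicity of $\E$ then yields the stated inequality. This step needs $\E[\ell(\hat{X}_t)]$ and $\E[\ell(\hat{Y}_t)]$ to be finite, so that their difference is the expectation of the pointwise difference.

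\textbf{Integrability.} The estimates $\hat{X}_t$ and $\hat{Y}_t$ are calibrated risk estimates with values in $[0,1]$. The function $\ell$ of Theorem~\ref{thm:dp} is bounded on $[0,1]$: it is at most $\max\{c_{\mathrm{FN}}c^*,\,c_{\mathrm{FP}}(1-c^*)\}$, and both terms equal $c_{\mathrm{FP}}c_{\mathrm{FN}}/(c_{\mathrm{FP}}+c_{\mathrm{FN}})$. So every expectation involved is finite and no further integrability argument is needed. If one wants $\hat{X}_t,\hat{Y}_t$ to be arbitrary integrable variables, the Lipschitz bound gives $|\ell(x)|\le|\ell(0)|+L|x|$, which again ensures integrability.

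\textbf{The main obstacle is the Lipschitz hypothesis itself.} The $\ell$ defined in Theorem~\ref{thm:dp} is piecewise linear, with slope $c_{\mathrm{FN}}$ on $[0,c^*]$ and slope $-c_{\mathrm{FP}}$ on $(c^*,1]$. Its one-sided values at $c^*$ agree, since $c_{\mathrm{FN}}c^* = c_{\mathrm{FP}}(1-c^*)$ by the definition of $c^*$. Hence $\ell$ is continuous and Lipschitz with $L=\max\{c_{\mathrm{FN}},c_{\mathrm{FP}}\}$; in fact $\ell(x)=\min\{c_{\mathrm{FN}}x,\,c_{\mathrm{FP}}(1-x)\}$, a minimum of two Lipschitz functions. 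I would check this explicitly so the proposition applies to the paper's loss.

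I would close with a remark that, by Jensen or Cauchy--Schwarz, the bound is at most $L\,(\E[(\hat{X}_t-\hat{Y}_t)^2])^{1/2}$. This links it to the projection loss of Proposition~\ref{prop:projection_loss}, with the population quantities $X_t, Y_t$ replaced by their estimates $\hat{X}_t,\hat{Y}_t$.
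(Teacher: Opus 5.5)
Your proof is correct and follows the paper's own argument: bound $\ell(\hat{Y}_t)-\ell(\hat{X}_t)$ pointwise by $L|\hat{X}_t-\hat{Y}_t|$, then take expectations, using $L=\max(c_{\mathrm{FP}},c_{\mathrm{FN}})$ for the threshold loss. Your integrability check and your verification that $\ell(x)=\min\{c_{\mathrm{FN}}x,\,c_{\mathrm{FP}}(1-x)\}$ is continuous at $c^*$ are useful details that the paper leaves implicit.
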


\begin{proof}[Proof sketch]
Apply the Lipschitz bound $|\ell(x)-\ell(y)|\le L|x-y|$ pointwise to
$(\hat{X}_t,\hat{Y}_t)$, then take expectations.  For the threshold loss of
Proposition~\ref{prop:threshold}, the Lipschitz constant is
$L=\max(c_{\mathrm{FP}},c_{\mathrm{FN}})$, and the bound is tight; see the
Appendix Section~\ref{sec:S1} for the full proof.
\end{proof}

Table~\ref{tab:theory-to-empirical} maps the theoretical quantities to their
empirical counterparts.

\begin{table}[htbp]
\centering
\caption{Mapping from theoretical framework to empirical estimands.}
\label{tab:theory-to-empirical}
\renewcommand{\arraystretch}{1.2}
\begin{tabular}{p{5cm}p{4.5cm}p{4.5cm}}
\toprule
\textbf{Theory} & \textbf{Empirical proxy} & \textbf{Diagnostic metric} \\
\midrule
$X_t = \E[D|\F_t]$ & $\hat{X}_t$ (logistic regression) & AUC, Brier score, calibration \\
$Y_t = \E[D|\G_t]$ & $\hat{Y}_t$ (score/compressed model) & Prob-MSE vs.\ $\hat{X}_t$ \\
$\E[X_{t+1}-X_t|\F_t]=0$ & conditional mean of $\hat{X}_{t+1}-\hat{X}_t$ & drift $\approx 0$? \\
$\ell(\hat{X}_t)+c_t$ & decision loss + test cost & total cost at each stage \\
$\tau^*$ & $\argmin_t[\text{total cost}(t)]$ & optimal stopping stage \\
\bottomrule
\end{tabular}
\end{table}

\paragraph{Empirical martingale diagnostic.}
For each pair of successive stages, compute the increment
$\Delta_t = \hat{X}_{t+1} - \hat{X}_t$ and estimate the conditional mean
$\E[\Delta_t | \hat{X}_t]$ by binning $\hat{X}_t$ into quantile groups.  Under
an exact martingale, $\E[\Delta_t|\hat{X}_t]=0$ in every bin.  We report the
weighted mean conditional drift $M_t = \sum_b w_b\,\bar{\Delta}_{t,b}$ and the
weighted mean squared conditional drift $S_t = \sum_b w_b\,\bar{\Delta}_{t,b}^2$
as diagnostic summaries.

\paragraph{Patient-level bridge quantities.}
The framework in Theorem~\ref{thm:dp} is patient-specific, but the empirical
implementation relies on population-average stagewise comparisons.  To
partially bridge this gap, we report two patient-level quantities for each
stage transition.
\begin{itemize}[leftmargin=1.8em,itemsep=2pt]
\item[] \textbf{Proportion of patients with a stable decision:}
  $\widehat{\pi}_{t} = n_{\mathrm{te}}^{-1}
  \sum_{i=1}^{n_{\mathrm{te}}}\ind\{\hat{a}_{t,i}^* = \hat{a}_{t+1,i}^*\}$,
  where $\hat{a}_{t,i}^* = \ind\{\hat{X}_{t,i} > c^*\}$.  A high value means
  that moving to the next stage rarely changes the clinical recommendation;
  a low value means many patients are near the decision threshold and could
  be reclassified.
\item[] \textbf{Mean distance from the decision threshold:}
  $\bar{d}_t = n_{\mathrm{te}}^{-1}\sum_i |\hat{X}_{t,i} - c^*|$.  A large
  value indicates that most patients are comfortably above or below the
  threshold (confident decisions), while a small value indicates widespread
  uncertainty about the optimal action.
\end{itemize}
These two quantities are computed for each study after the datasets and
implementation are described; Table~\ref{tab:patient_level} in
Section~\ref{sec:results} reports the values for all four studies together
with their interpretation.

\subsection{Data and Implementation}

\subsubsection{Datasets}
\label{sec:datasets}

Four publicly available retrospective datasets are used.  Table~\ref{tab:datasets}
summarizes the corresponding clinical settings and staged information
structures.

\begin{table}[htbp]
\centering
\caption{Datasets used in the four retrospective studies.
BCW = Breast Cancer Wisconsin.}
\label{tab:datasets}
\renewcommand{\arraystretch}{1.25}
\footnotesize
\setlength\tabcolsep{4pt}
\setlength\hfuzz{3pt}
\begin{tabular}{p{3.2cm}rp{2.8cm}p{5.5cm}p{1.7cm}}
\toprule
\textbf{Dataset} & \textbf{$n$} & \textbf{Outcome} &
\textbf{Staged structure} & \textbf{Access} \\
\midrule
BCW (Study~1) & 569 & Malignant/benign &
  3 blocks: mean / SE / worst-value & UCI (free) \\
Cleveland (Study~2) & 303 & Heart disease &
  3 stages: resting / exercise test / imaging & UCI (free) \\
{Pima:Diabetes} (Study~3) & 768 & Diabetes onset &
  3 stages: screen / +pedigree / full 8 features & UCI (free) \\
{eICU:Demo} (Study~4) & 2{,}359 & In-hospital death &
  3 stages: 24-h vitals / +labs / +demographics & PhysioNet (free) \\
\bottomrule
\end{tabular}
\end{table}

The first three datasets are well-established benchmark collections from the
UCI Machine Learning Repository.  The Breast Cancer Wisconsin (Diagnostic)
dataset \citep{wolberg1994} comprises $n=569$ patients described by 30
morphological features derived from fine-needle aspirate images; the features
are naturally grouped into three blocks representing the mean, standard error,
and worst-value summaries of the same ten cell-nucleus measurements.  The
Cleveland Heart Disease dataset \citep{detrano1989} contains $n=303$ patients
with binary coronary-artery disease status, staged from resting clinical
variables (age, sex, chest-pain type, resting ECG, cholesterol) through
exercise-test results to imaging and invasive findings (fluoroscopy vessel
count, thalassemia).  The Pima Indians Diabetes dataset \citep{smith1988}
contains $n=768$ women of Pima Indian heritage with binary diabetes-onset
status, staged from accessible metabolic measurements (glucose, BMI, age)
through reproductive history and family pedigree to a full panel of clinical
variables.  All three datasets have been widely used in the machine-learning
and clinical-prediction literature, and their staging structures are
analyst-constructed approximations to plausible clinical workup sequences.

\paragraph{Study~4: eICU Collaborative Research Database Demo.}
The eICU Collaborative Research Database Demo \citep{pollard2018eicu} is a
freely available subset of the eICU-CRD, containing $2{,}520$ ICU admissions
from 2014--2015 (downloaded from PhysioNet without credentials).  After
restricting to patients with at least one valid vital-sign measurement in the
first 24 hours, the analytic cohort comprised $n=2{,}359$ admissions, with an
in-hospital mortality rate of 8.6\%.  The three information stages are:
\begin{align*}
\F_1 &= \text{8 vital-sign means (first 24 h): heart rate, SpO}_2\text{, temperature, respiration,}\\
     &\quad\quad \text{systolic/diastolic/mean BP, EtCO}_2\text{,}\\
\F_2 &= \F_1 \cup \text{16 first-available lab values: K, Na, Cl, BUN, creatinine, glucose, Hgb, Hct,}\\
     &\quad\quad \text{WBC, platelets, bicarbonate, albumin, AST, ALT, calcium, magnesium,}\\
\F_3 &= \F_2 \cup \text{5 static variables: age, sex, ICU unit type, admission height, weight.}
\end{align*}
\noindent\textbf{Note on staging convention.}
Demographics ($\F_3$: age, sex, ICU unit type, height, weight) are typically
recorded at admission, not temporally after laboratory results.  Study~4
therefore uses \emph{analytic staging} ordered by incremental predictive
value rather than literal chronology.  A sensitivity ordering with
demographics at $\F_1$ is examined in Appendix Section~\ref{sec:S4}.


\subsubsection{Statistical models}
\label{sec:models}

At each information stage $t$, the primary model for $\hat{X}_t$ is a
standardized logistic regression with $\ell_2$ regularization and
\texttt{liblinear} solver, with features centered and scaled within each
training fold.  The projection model $\hat{Y}_t$ is a principal-component-based
logistic regression (1 or 3 components) representing simplified score-like
summaries of the full stage-specific information.

\subsubsection{Evaluation metrics}
\label{sec:metrics}

\textbf{Prediction} was assessed by AUC (discrimination), Brier score, and log-loss.
\textbf{Projection loss} was measured by
\[
\widehat{\mathrm{PL}}_t =
n_{\mathrm{te}}^{-1}\sum_{i=1}^{n_{\mathrm{te}}}
(\hat{X}_{t,i} - \hat{Y}_{t,i})^2.
\]
\textbf{Decision metrics} included sensitivity, specificity, accuracy, and the empirical decision loss
\[
\hat{\ell}_t =
(c_{\mathrm{FP}}\cdot\mathrm{FP} + c_{\mathrm{FN}}\cdot\mathrm{FN})/n_{\mathrm{te}}
\]
evaluated at threshold $c^*$. The \textbf{stopping metric} was the total cost,
\[
\hat{\ell}_t + \mathrm{cumulative\ test\ cost}(t).
\]
\textbf{Drift} was summarized by the weighted mean conditional drift $M_t$ and weighted mean squared conditional drift $S_t$ using 10 quantile bins. \textbf{Calibration} was assessed by the calibration slope and intercept from logistic recalibration of the observed outcome on the log-odds of the fitted probability in the test set, averaged over 1{,}000 splits; a slope near 1 and an intercept near 0 indicate adequate calibration, with results reported in Appendix Section~\ref{sec:S5}.

\subsubsection{Validation protocol}
\label{sec:validation}

Each study uses 1{,}000 stratified random train/test splits (70\%/30\%).
All analyses use Python 3.12 with \texttt{scikit-learn}, \texttt{numpy},
\texttt{pandas}, and \texttt{matplotlib}.  Code details are provided in
Appendix Section~\ref{sec:S3}.

\subsubsection{Decision parameters}
\label{sec:decision_params}

For Studies~1--3 we set $c_{\mathrm{FP}}=1$ and $c_{\mathrm{FN}}=5$,
giving the Bayes threshold $c^*=1/6\approx 0.167$.  Clinically, this means
that a false negative is treated as five times as costly as a false positive,
so intervention is preferred once the estimated risk exceeds about 17\%.

The cumulative stage costs are illustrative rather than literal monetary
reimbursement values.  They are intended to summarize the added burden of
moving from one stage of workup to the next, including resource use, patient
inconvenience, invasiveness, and delay.  We use $(0, 0.01, 0.03)$ for
Studies~1,~3, and~4, where the later stages represent modestly more intensive
assessment, and $(0, 0.02, 0.06)$ for Study~2, where the pathway moves from a
resting baseline assessment to exercise testing and then to imaging/invasive
information.  Among the four applications, Study~4 is the only setting with a
rare adverse outcome and an ICU triage interpretation, so its decision threshold
is specified separately.  This is not a methodological exception, but an
application-specific choice of loss ratio within the same optimal-stopping
framework.  For Study~4, we set $c_{\mathrm{FN}}=10$ (with $c_{\mathrm{FP}}=1$),
giving $c^*=1/11\approx 0.091$, to reflect that a missed in-hospital death is
treated as ten times more costly than an unnecessary escalation of care in this
ICU setting.  The stopping recommendations should therefore be read as examples
of how conclusions depend on the clinical trade-off, not as fixed
cost-effectiveness claims.  To assess robustness, Appendix Section~\ref{sec:S4}
reports the preferred stopping stage for Studies~2 and~4 under a range of
alternative stage-cost schedules; the main conclusions hold across all
schedules examined.

\section{Numerical Results}
\label{sec:results}

Projection-loss analyses and martingale drift diagnostics omitted from the main
paper are collected for all four studies in Appendix Section~\ref{sec:S2}.

Before presenting each study in turn, Table~\ref{tab:patient_level} provides
a cross-study summary of the two patient-level bridge quantities defined in
Section~\ref{sec:bridge}: the proportion of patients whose threshold decision
is stable across a stage transition ($\hat{\pi}_t$) and the mean distance of
predicted risk from the decision threshold ($\bar{d}_t$).  These numbers
foreshadow the key finding of each study and allow direct comparison across
settings.

\begin{table}[htbp]
\centering
\caption{Patient-level bridge quantities across all four studies (mean over
200 repeated 70/30 stratified splits).  $\hat{\pi}_{t}$ = proportion of test
patients whose threshold decision is unchanged when moving to the next
information stage.  $\bar{d}_t$ = mean distance of the predicted risk from
the decision threshold $c^*$.  High $\hat{\pi}$ indicates the next stage
rarely changes management; low $\bar{d}$ indicates many patients are near
the action boundary.  We use 200 rather than 1{,}000 splits because
$\hat{\pi}_t$ and $\bar{d}_t$ are proportions and distances whose Monte Carlo
standard error is negligible at this count; results were indistinguishable
when re-run at 500 splits.}
\label{tab:patient_level}
\renewcommand{\arraystretch}{1.2}
\begin{tabular}{lrrrrrr}
\toprule
 & \multicolumn{2}{c}{Prop.\ stable $\hat{\pi}$} &
   \multicolumn{3}{c}{Mean dist.\ from $c^*$: $\bar{d}$} \\
\cmidrule(lr){2-3}\cmidrule(lr){4-6}
Study ($c^*$) & $\F_1\to\F_2$ & $\F_2\to\F_3$ &
  $\F_1$ & $\F_2$ & $\F_3$ \\
\midrule
BCW ($c^*\!=\!1/6$)       & 0.977 & 0.965 & 0.501 & 0.509 & 0.531 \\
Cleveland ($c^*\!=\!1/6$) & 0.899 & 0.880 & 0.343 & 0.346 & 0.352 \\
Pima ($c^*\!=\!1/6$)      & 0.952 & 0.980 & 0.309 & 0.310 & 0.309 \\
eICU ($c^*\!\approx\!0.091$) & 0.984 & 0.892 & 0.350 & 0.315 & 0.292 \\
\bottomrule
\end{tabular}
\end{table}

For Pima Diabetes (Study~3), 98\% of patients already have a stable decision
at $\F_1$: moving to $\F_2$ changes the recommended action for only 2 patients
in 100, which directly justifies stopping at $\F_1$ under any small test cost.
For eICU (Study~4), 11\% of patients have their decision reversed when moving
from $\F_2$ to $\F_3$, producing the large observed drop in decision loss.
For Cleveland (Study~2), both transitions are the least stable
($\approx88$--$90$\%), indicating that more patients are near the decision
boundary, yet the gain from the final stage is still too small to justify its
added cost.  The mean distance $\bar{d}$ reflects the difficulty of the
classification task: BCW posteriors are far from the threshold
($\bar{d}\approx0.51$), consistent with near-perfect separation; Pima and
eICU posteriors sit much closer ($\bar{d}\approx0.31$--$0.35$), meaning many
patients are genuinely uncertain under the stated cost ratio.

\subsection{Study 1: Breast Cancer Wisconsin --- When Waiting for Full Review Is Worthwhile}
\label{sec:bcw}

\subsubsection{Dataset and staged structure}
\label{sec:bcw_data}

The Breast Cancer Wisconsin (Diagnostic) dataset
\citep{wolberg1994} contains $n=569$ patients (357 benign, 212 malignant)
described by 30 numeric features derived from digitised images of fine-needle
aspirates.  These features are naturally partitioned into three blocks based on
the standard summary statistics of the same ten cell-nucleus measurements:
\begin{align*}
\F_1 &= \text{10 mean features (radius, texture, perimeter, area, \ldots)},\\
\F_2 &= \F_1 \cup \text{10 standard-error features},\\
\F_3 &= \text{all 30 features}~(\F_2 \cup \text{10 worst-value features}).
\end{align*}
Mean features summarize typical cell morphology, standard-error features capture within-slide variability, and worst-value features flag the most abnormal cell---information available only after complete pathology review.

\subsubsection{Stagewise predictive performance}
\label{sec:bcw_prediction}

Table~\ref{tab:bcw_stage} reports the mean predictive performance over
1{,}000 repeated train/test splits.  All four metrics improve monotonically as
the pathology review becomes more complete.  AUC increased from $0.985$ at the
mean-feature stage (\(\F_1\)) to $0.987$ after adding within-slide variability
features (\(\F_2\)) and to $0.995$ with the full pathology summary
(\(\F_3\)); Brier score fell from $0.045$ to $0.038$ and $0.020$.  The small
SDs show that these gains were stable across repeated splits.

The gain from $\F_1$ to $\F_3$ is clinically meaningful: better separation directly supports more reliable downstream decisions.

\begin{table}[htbp]
\centering
\caption{Stagewise predictive performance over 1{,}000 repeated train/test
splits (Breast Cancer Wisconsin, Study~1).  Mean and SD across
repetitions.  The table reports cohort-level summaries of an underlying
subject-specific risk-updating process under staged pathology information.}
\label{tab:bcw_stage}
\renewcommand{\arraystretch}{1.15}
\begin{tabular}{llrr}
\toprule
Stage & Metric & Mean & SD \\
\midrule
$\F_1$ (mean features)   & AUC      & 0.9848 & 0.0063 \\
$\F_1$                   & Brier    & 0.0452 & 0.0093 \\
$\F_1$                   & Accuracy & 0.9364 & 0.0160 \\
$\F_1$                   & Log-loss & 0.1493 & 0.0284 \\
\addlinespace
$\F_2$ (mean + SE)       & AUC      & 0.9867 & 0.0059 \\
$\F_2$                   & Brier    & 0.0382 & 0.0088 \\
$\F_2$                   & Accuracy & 0.9534 & 0.0151 \\
$\F_2$                   & Log-loss & 0.1333 & 0.0281 \\
\addlinespace
$\F_3$ (all 30 features) & AUC      & 0.9947 & 0.0041 \\
$\F_3$                   & Brier    & 0.0203 & 0.0063 \\
$\F_3$                   & Accuracy & 0.9767 & 0.0106 \\
$\F_3$                   & Log-loss & 0.0786 & 0.0242 \\
\bottomrule
\end{tabular}
\end{table}

\begin{figure}[htbp]
\centering
\begin{subfigure}[t]{0.48\textwidth}
\includegraphics[width=\textwidth]{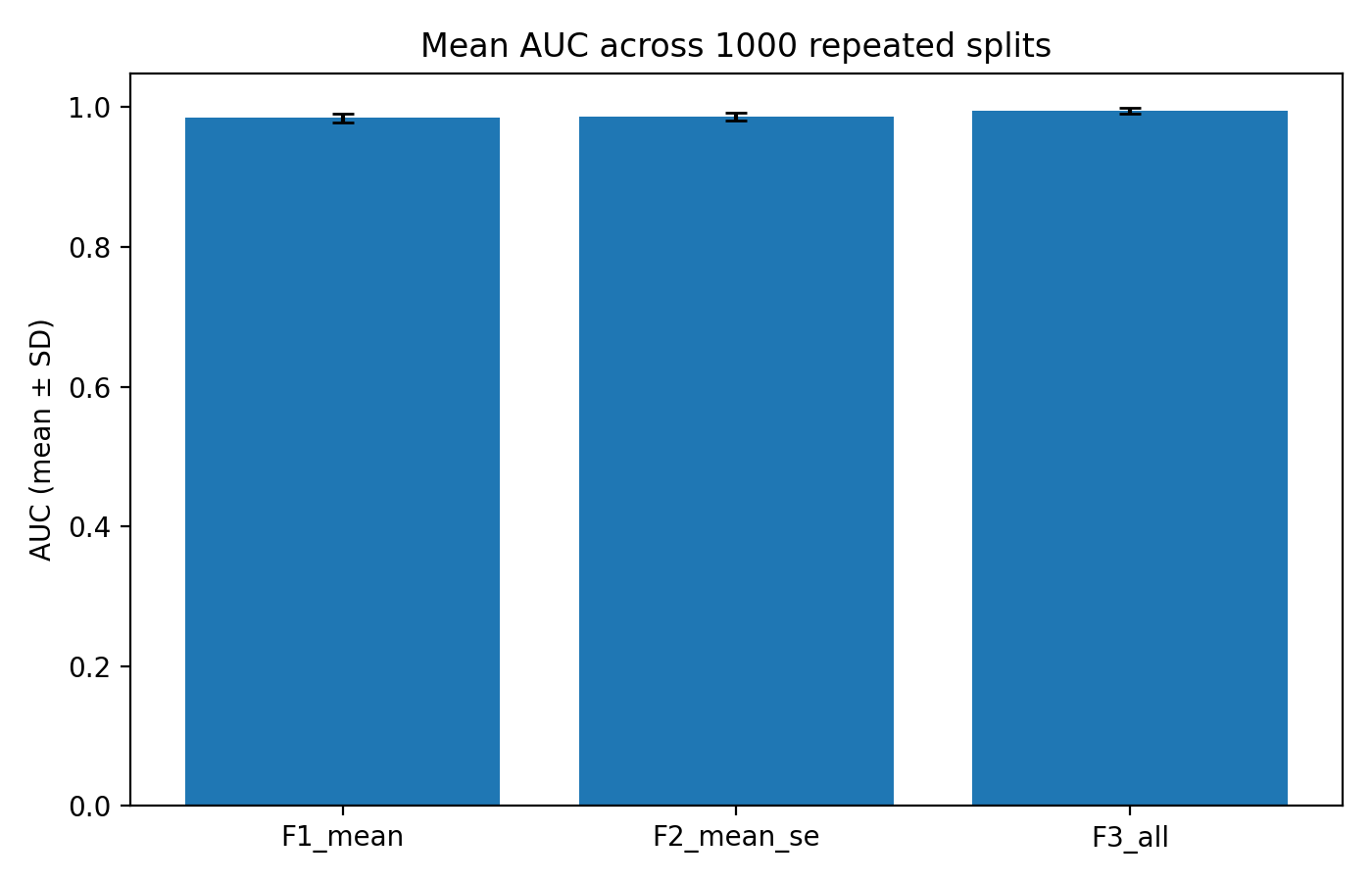}
\caption{Mean AUC at each stage.}
\label{fig:bcw_auc}
\end{subfigure}
\hfill
\begin{subfigure}[t]{0.48\textwidth}
\includegraphics[width=\textwidth]{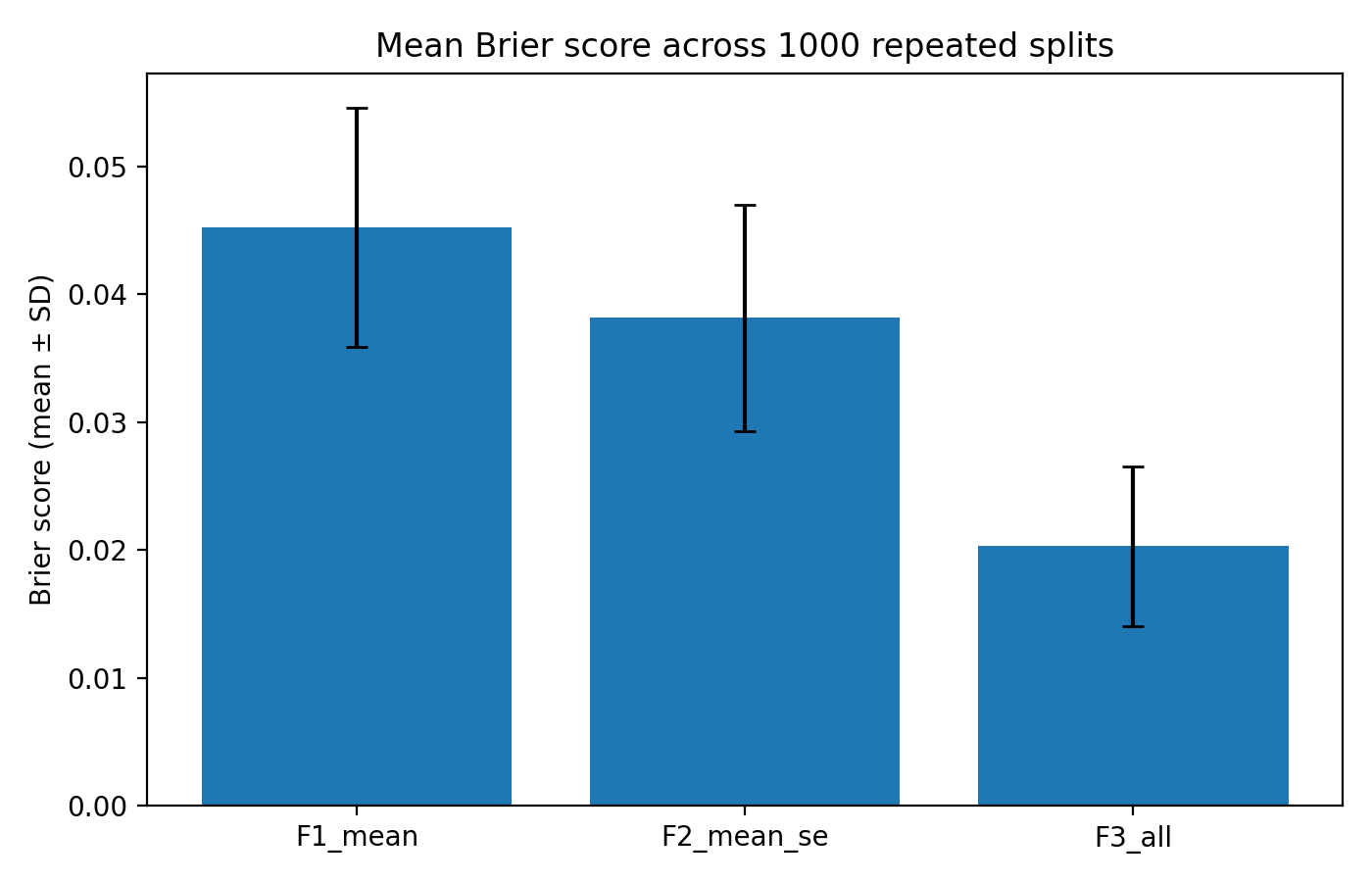}
\caption{Mean Brier score at each stage (lower is better).}
\label{fig:bcw_brier}
\end{subfigure}
\caption{Breast Cancer Wisconsin (Study~1): stagewise predictive performance
over 1{,}000 repeated experiments.  Both discrimination (AUC) and calibration
(Brier) improve monotonically as additional feature blocks are incorporated.
Read clinically, the figure summarizes how individual patient risk estimates
become more informative as the pathology review becomes more complete.}
\label{fig:bcw_performance}
\end{figure}

\subsubsection{Decision performance and optimal stopping}
\label{sec:bcw_stopping}

Under the threshold $c^*=1/6$, expected decision loss fell from $0.154$ at
$\F_1$ to $0.142$ at $\F_2$ and $0.086$ at $\F_3$.  Adding the illustrative
cumulative stage costs $(0.00, 0.01, 0.03)$ yields total costs
$(0.154, 0.152, 0.116)$, so the full-information stage $\F_3$ remains
preferred under this schedule (Figure~\ref{fig:bcw_stopping}).  Sensitivity
exceeded $96\%$ at all stages, while specificity improved markedly from $0.863$
to $0.935$.

From a medical decision-making perspective, these results suggest that the
additional pathology review is clinically worthwhile: the improvement in
classification is large enough to offset the added effort of obtaining the
fuller information.  Here the subject-specific risk updates are large enough to
change management in a clinically meaningful way, not merely to improve a
population-average performance summary.

\begin{table}[htbp]
\centering
\caption{Decision performance and stopping analysis (Study~1).
Total cost = decision loss + cumulative test cost.  Reading down the table,
decision loss falls enough from $\F_1$ to $\F_3$ to more than offset the added
stage cost, so waiting for the full pathology review is preferred under this
illustrative trade-off.  The cohort averages shown here summarize the
patient-level consequences of acting at each stage.}
\label{tab:bcw_decision}
\renewcommand{\arraystretch}{1.15}
\begin{tabular}{lrrrrrr}
\toprule
Stage & Sensitivity & Specificity & Accuracy & Dec.\ loss & Test cost & Total \\
\midrule
$\F_1$ & 0.9638 & 0.8631 & 0.9007 & 0.1535 & 0.00 & 0.1535 \\
$\F_2$ & 0.9644 & 0.8803 & 0.9118 & 0.1416 & 0.01 & 0.1516 \\
$\F_3$ & 0.9756 & 0.9354 & 0.9505 & 0.0860 & 0.03 & 0.1160 \\
\bottomrule
\end{tabular}
\end{table}

Table~\ref{tab:bcw_decision} can be read literally as follows: if the goal is
to decide whether to act after a partial versus fuller review of the aspirate,
then the reduction in wrong decisions from $\F_1$ to $\F_3$ is larger than the
extra stage cost built into the example.  Figure~\ref{fig:bcw_stopping}
therefore highlights a setting in which waiting for fuller information is
clinically worthwhile.

\begin{figure}[htbp]
\centering
\includegraphics[width=0.58\textwidth]{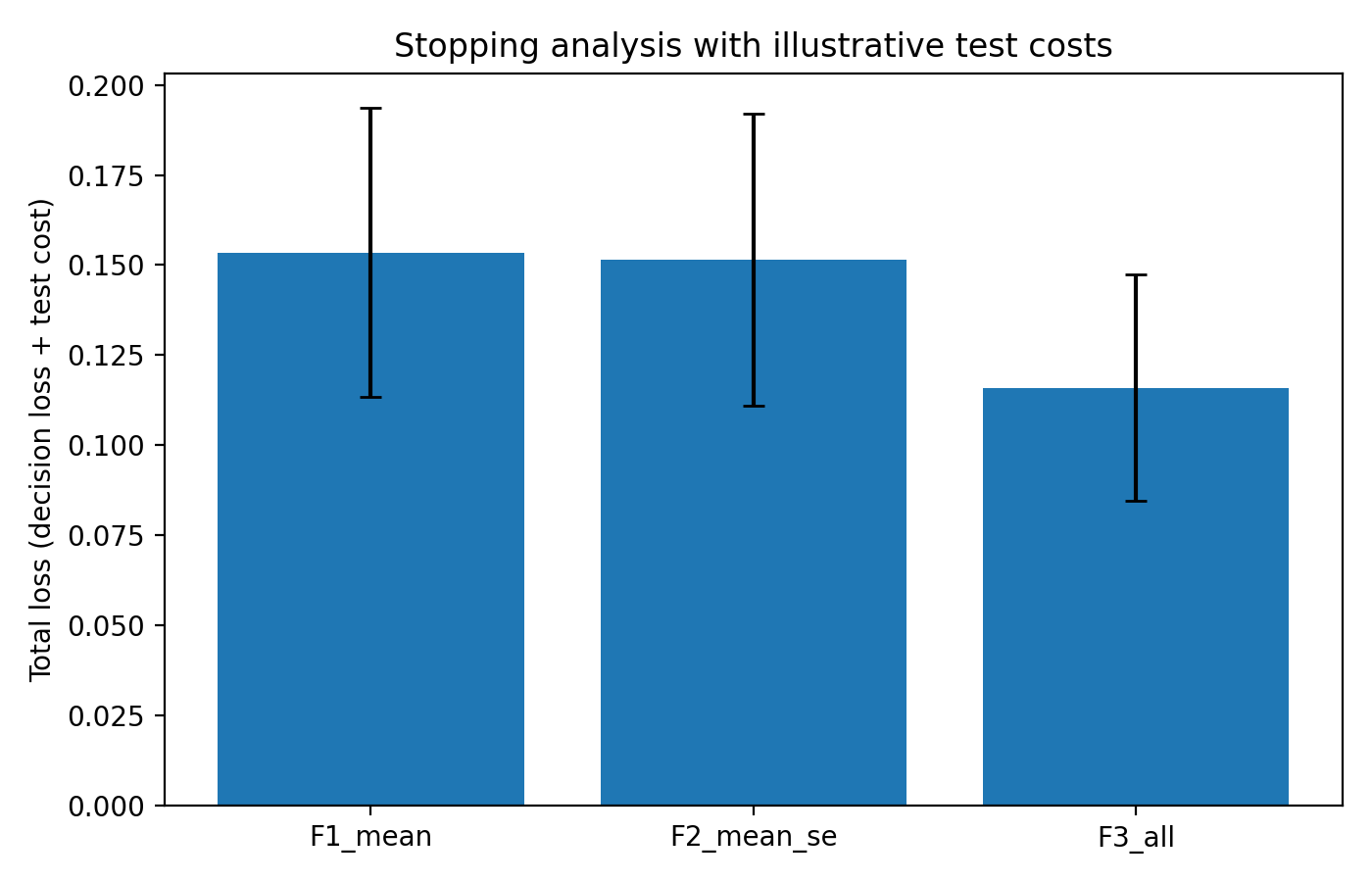}
\caption{Study~1: total expected cost (decision loss + cumulative test cost) by
stage.  The full-information stage $\F_3$ minimizes total cost in this
example.  The literal reading is simple: the extra pathology detail changes the
decision often enough, and in the right direction often enough, that its gain
exceeds the added burden of waiting for it.}
\label{fig:bcw_stopping}
\end{figure}

\subsection{Study 2: Cleveland Heart Disease --- Stopping Before the Final Workup}
\label{sec:cleveland}

\subsubsection{Dataset and staged structure}
\label{sec:cleveland_data}

The Cleveland Heart Disease dataset \citep{detrano1989} contains $n=303$
patients (164 disease-free, 139 with heart disease) described by 13 features
from a standard cardiac workup.  The three information stages follow the natural clinical progression of a cardiac workup:
\begin{align*}
\F_1 &= \{\text{age, sex, chest-pain type, resting BP, cholesterol, fasting blood sugar, resting ECG}\},\\
\F_2 &= \F_1 \cup \{\text{maximum heart rate, exercise-induced angina, ST depression, ST slope}\},\\
\F_3 &= \text{all 13 features}~(\F_2 \cup \{\text{major vessels by fluoroscopy, thalassemia}\}).
\end{align*}
$\F_1$ is the resting baseline, $\F_2$ adds the exercise-test result, and $\F_3$ adds imaging/invasive information.

\subsubsection{Stagewise predictive performance}
\label{sec:cleveland_pred}

Table~\ref{tab:cleveland_stage} reports mean performance over 1{,}000
train/test splits.  AUC improved monotonically from $0.820$ at the resting
baseline stage (\(\F_1\)) to $0.861$ after exercise testing (\(\F_2\)) and to
$0.898$ after adding imaging/invasive information (\(\F_3\)); Brier score fell
from $0.174$ to $0.153$ and $0.128$.  Sensitivity decreases slightly from
$\F_1$ to $\F_3$ (from $0.941$ to $0.917$), whereas specificity increases
markedly (from $0.352$ to $0.586$).

Discrimination and calibration improve monotonically across stages, but the relevant question is whether the extra improvement is large enough to justify the added burden of later-stage workup.

\begin{table}[htbp]
\centering
\caption{Stagewise predictive performance over 1{,}000 repeated train/test
splits (Cleveland Heart Disease, Study~2).  Mean and SD across repetitions.
These cohort summaries are interpreted as evidence about subject-specific risk
updating across a staged cardiac workup.}
\label{tab:cleveland_stage}
\renewcommand{\arraystretch}{1.15}
\begin{tabular}{llrr}
\toprule
Stage & Metric & Mean & SD \\
\midrule
$\F_1$ (resting baseline)   & AUC      & 0.820 & 0.036 \\
                            & Brier    & 0.174 & 0.018 \\
                            & Accuracy & 0.750 & 0.039 \\
                            & Log-loss & 0.532 & 0.051 \\
\addlinespace
$\F_2$ ($\F_1{}+{}$exercise test) & AUC & 0.861 & 0.032 \\
                            & Brier    & 0.153 & 0.020 \\
                            & Accuracy & 0.775 & 0.038 \\
                            & Log-loss & 0.470 & 0.059 \\
\addlinespace
$\F_3$ ($\F_2{}+{}$imaging) & AUC      & 0.898 & 0.028 \\
                            & Brier    & 0.128 & 0.021 \\
                            & Accuracy & 0.829 & 0.033 \\
                            & Log-loss & 0.408 & 0.063 \\
\bottomrule
\end{tabular}
\end{table}

\begin{figure}[htbp]
\centering
\begin{subfigure}[t]{0.48\textwidth}
\includegraphics[width=\textwidth]{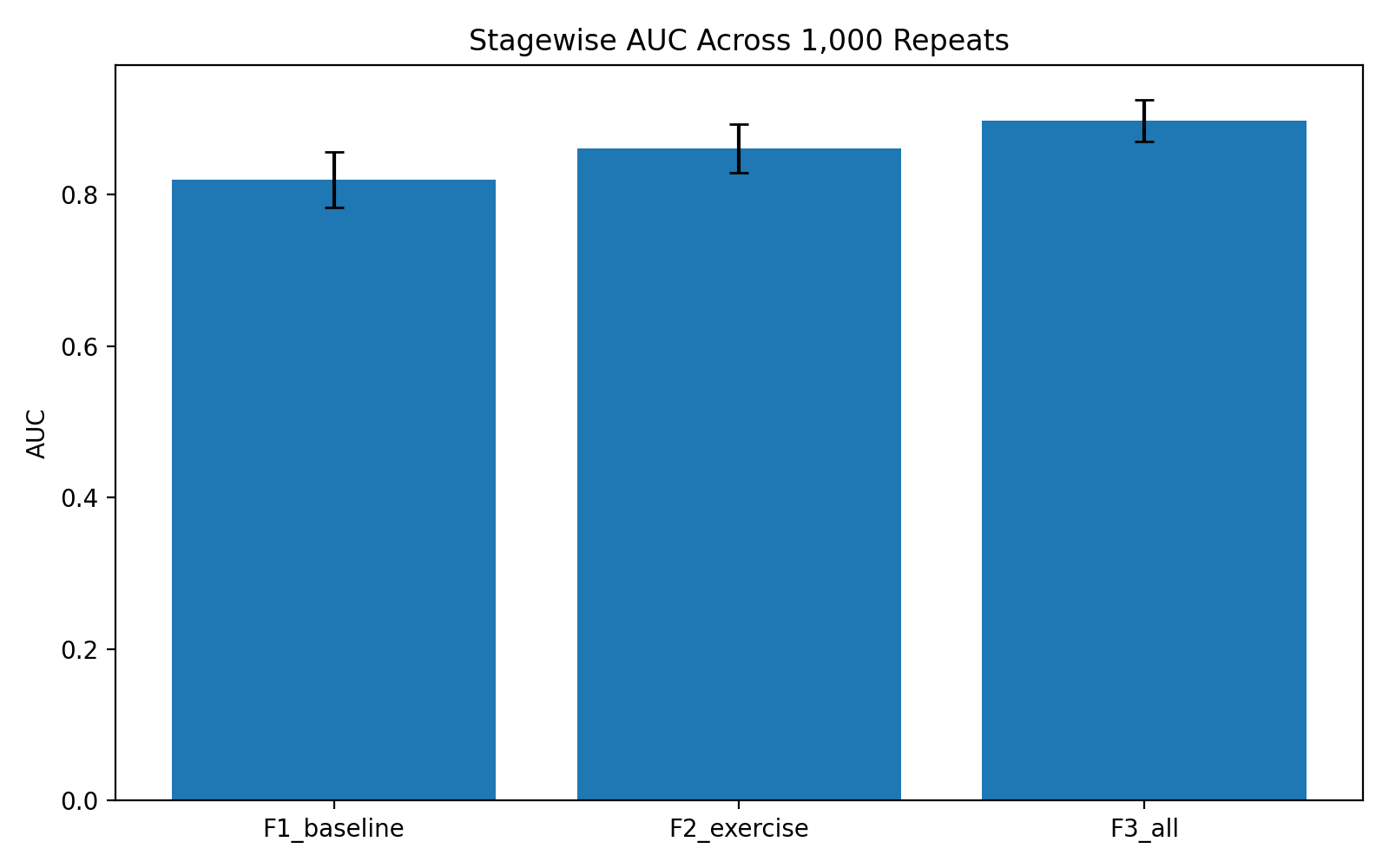}
\caption{Mean AUC at each stage.}
\end{subfigure}
\hfill
\begin{subfigure}[t]{0.48\textwidth}
\includegraphics[width=\textwidth]{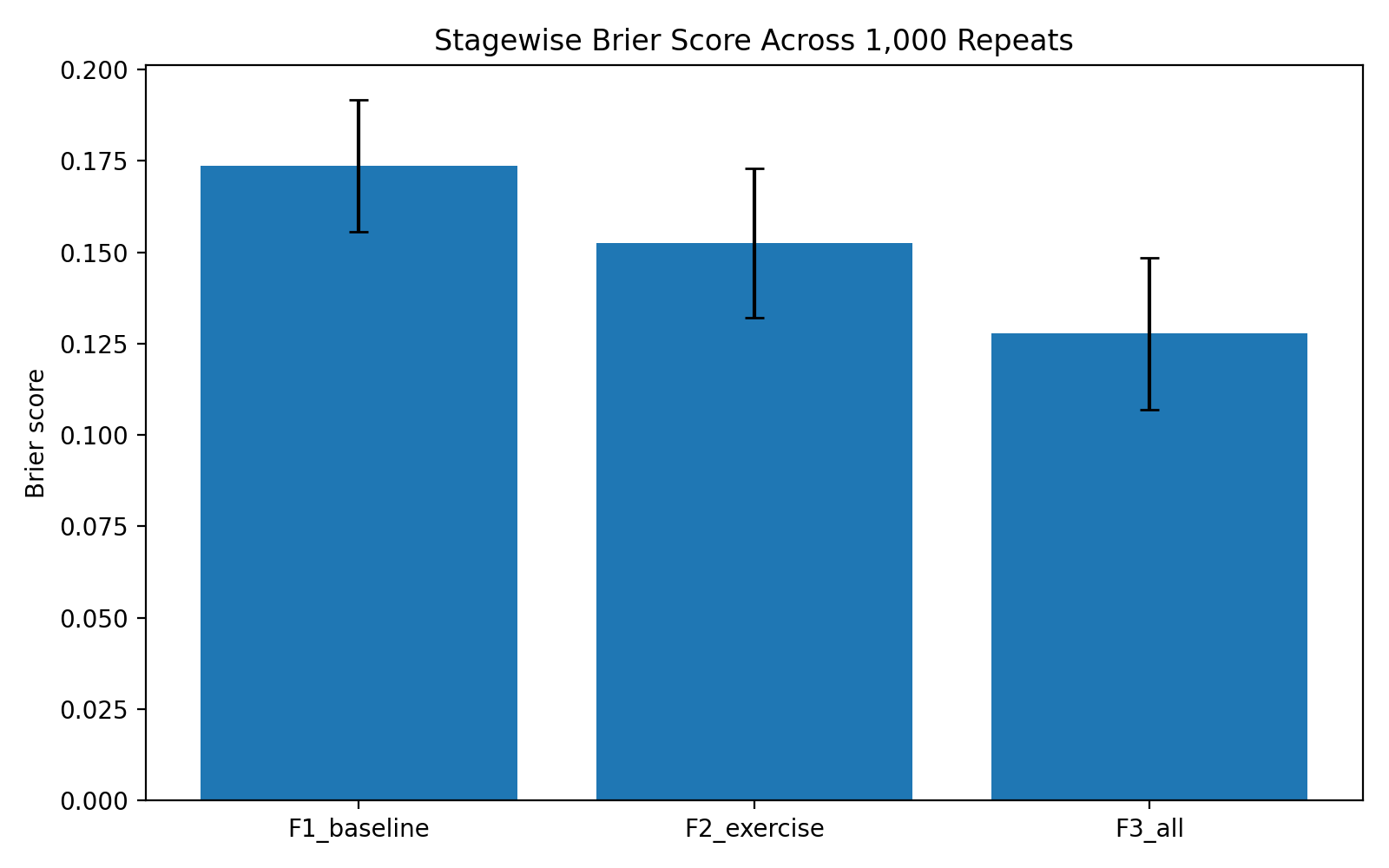}
\caption{Mean Brier score at each stage (lower is better).}
\end{subfigure}
\caption{Cleveland Heart Disease (Study~2): stagewise predictive performance
over 1{,}000 repeated experiments.  AUC and Brier both improve monotonically
from resting baseline to exercise testing and again after imaging/invasive
information.  Although performance improves at $\F_3$, the optimal-stopping
analysis in Table~\ref{tab:cleveland_decision} shows this improvement is
insufficient to justify the added stage cost.}
\label{fig:cleveland_performance}
\end{figure}

\subsubsection{Decision performance and optimal stopping}
\label{sec:cleveland_stop}

Table~\ref{tab:cleveland_decision} reports decision performance and stopping
costs.  The illustrative cumulative stage costs are $(0, 0.02, 0.06)$ for
$(\F_1, \F_2, \F_3)$; these summarize the added burden of the resting
assessment, exercise testing, and imaging/invasive workup respectively.

\begin{table}[htbp]
\centering
\caption{Decision performance and stopping analysis (Cleveland Heart Disease,
Study~2).  Illustrative test-cost schedule $(0, 0.02, 0.06)$.
$\dagger$~Minimum total cost.  The literal message of the table is that the
final-stage reduction in decision loss is too small to pay for the extra burden
of imaging/invasive workup, so the preferred stopping stage is $\F_2$.  The
table translates subject-specific risk updates into the average clinical
consequences of acting at each stage.}
\label{tab:cleveland_decision}
\renewcommand{\arraystretch}{1.15}
\begin{tabular}{lrrrrrr}
\toprule
Stage & Sensitivity & Specificity & Accuracy & Dec.\ loss & Test cost & Total \\
\midrule
$\F_1$ & 0.941 & 0.352 & 0.624 & 0.486 & 0.00 & 0.486 \\
$\F_2$ & 0.937 & 0.484 & 0.693 & 0.424 & 0.02 & 0.444$^\dagger$ \\
$\F_3$ & 0.917 & 0.586 & 0.738 & 0.416 & 0.06 & 0.476 \\
\bottomrule
\end{tabular}
\end{table}

The key finding is a direct contrast between predictive improvement and
clinical action.  Although AUC and Brier score improve from $\F_2$ to $\F_3$,
adding imaging/invasive information reduces decision loss by only $0.008$
(from $0.424$ to $0.416$), which is much smaller than the additional stage
cost of $0.04$.  Under this illustrative clinical trade-off, the preferred
stopping stage is therefore $\F_2$, with total cost $0.444$; moving to $\F_3$
increases total cost to $0.476$.

In practice, this suggests that routine escalation beyond the exercise-test
stage would not be justified for every patient when the resting and exercise
information already support a sufficiently informed decision.  The final stage
still predicts better, but the incremental gain may be too small to matter
clinically once the burden of later imaging is taken into account.  Read
literally, Table~\ref{tab:cleveland_decision} says that the final workup buys
only a small reduction in wrong decisions, and that small gain is not enough to
pay for the added burden encoded in the cost schedule.  This is exactly the
distinction the paper seeks to highlight: the best-looking population model is
not automatically the best stage for an individual patient's management.

\begin{figure}[htbp]
\centering
\includegraphics[width=0.58\textwidth]{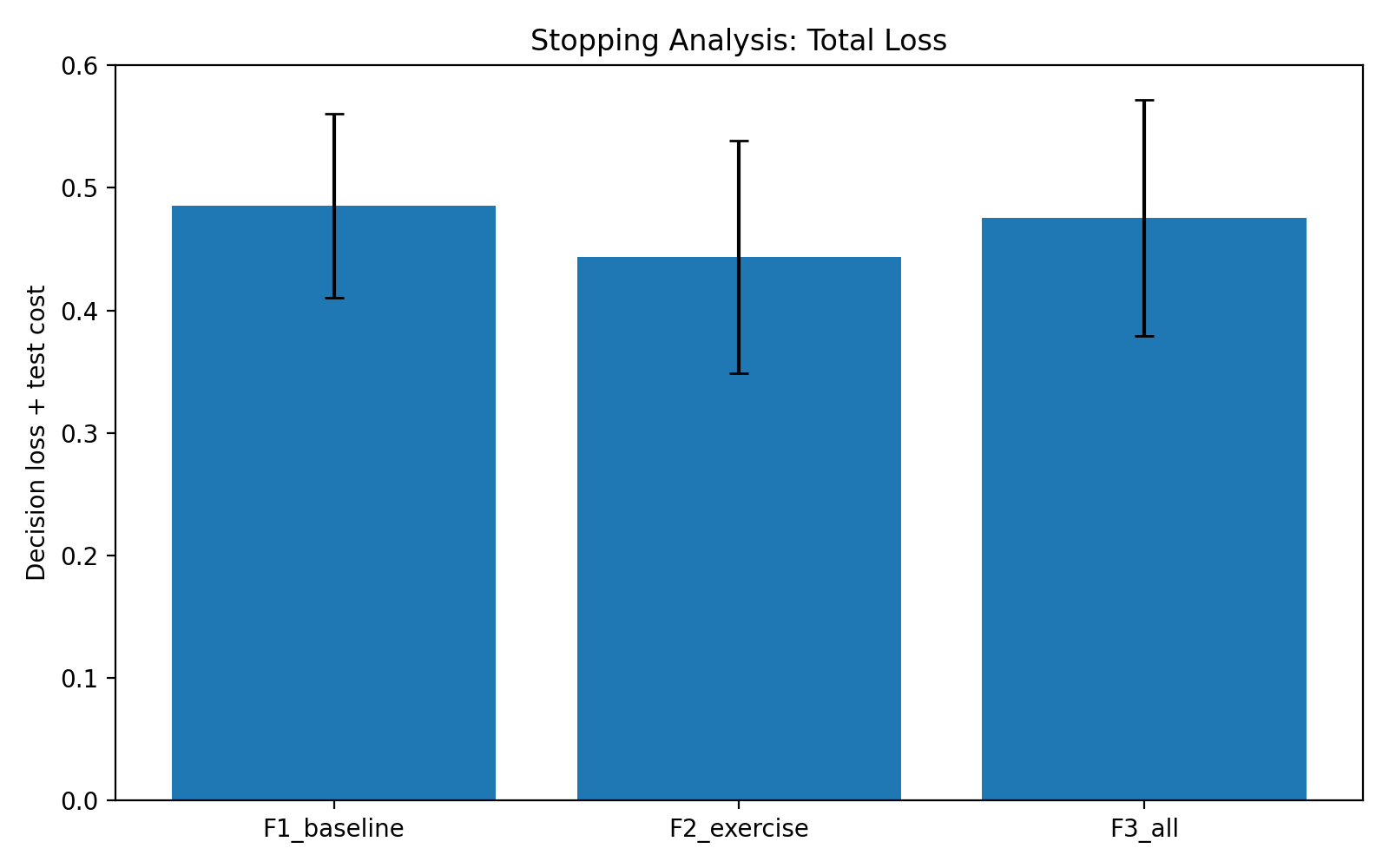}
\caption{Cleveland Heart Disease (Study~2): total expected cost (decision loss
+ cumulative test cost) by stage.  The exercise-test stage $\F_2$ minimizes
total cost despite $\F_3$ having lower decision loss in isolation.  This
illustrates how the optimal stopping stage need not be the stage with the best
predictive metrics.}
\label{fig:cleveland_stopping}
\end{figure}

\subsection{Study 3: Pima Diabetes --- When Early Action Is Preferred}
\label{sec:diabetes}

\subsubsection{Dataset and staged structure}
\label{sec:diabetes_data}

The Pima Indians Diabetes dataset \citep{smith1988} contains $n=768$ women of
Pima Indian heritage ($268$ diabetic, $500$ non-diabetic; 35\% prevalence),
each described by eight diagnostic measurements originally collected by the
National Institute of Diabetes and Digestive and Kidney Diseases.  We define
three information stages based on the clinical accessibility of the features:
\begin{align*}
\F_1 &= \{\text{glucose (fasting), BMI, age}\},\\
\F_2 &= \F_1 \cup \{\text{number of pregnancies, diabetes pedigree function}\},\\
\F_3 &= \text{all 8 features}~(\F_2 \cup \{\text{resting blood pressure,
        triceps skin thickness, serum insulin}\}).
\end{align*}
$\F_1$ captures the most accessible metabolic measurements; $\F_2$ adds reproductive and family history; $\F_3$ adds measurements requiring clinical examination or additional assay.

\subsubsection{Stagewise predictive performance}
\label{sec:diabetes_pred}

Table~\ref{tab:diabetes_stage} reports mean performance over 1{,}000 repeated
splits.  AUC improves from the screening stage $\F_1$ to the intermediate stage
$\F_2$ (from $0.827$ to $0.839$) but then declines slightly at the full stage
$\F_3$ (to $0.835$).  Brier score follows the same non-monotone pattern.
Unlike Studies~1 and~2, the most detailed stage does not improve prediction.
A plausible explanation is that skin-thickness and insulin measurements add
noise or instability in this dataset rather than clinically useful incremental
information.

The non-monotone AUC pattern suggests that the additional measurements add noise rather than useful signal in this dataset, making it a setting in which the optimal-stopping perspective directly clarifies why later measurement is not automatically better for action.

\begin{table}[htbp]
\centering
\caption{Stagewise predictive performance over 1{,}000 repeated train/test
splits (Pima Diabetes, Study~3).  Mean and SD across repetitions.  The table
summarizes cohort-level evidence about subject-specific risk updating under a
staged screening pathway.}
\label{tab:diabetes_stage}
\renewcommand{\arraystretch}{1.15}
\begin{tabular}{llrr}
\toprule
Stage & Metric & Mean & SD \\
\midrule
$\F_1$ (screen)       & AUC      & 0.827 & 0.023 \\
                       & Brier    & 0.161 & 0.011 \\
                       & Accuracy & 0.765 & 0.022 \\
                       & Log-loss & 0.486 & 0.027 \\
\addlinespace
$\F_2$ ($\F_1{}+{}$pregnancies/pedigree) & AUC & 0.839 & 0.022 \\
                       & Brier    & 0.156 & 0.011 \\
                       & Accuracy & 0.769 & 0.022 \\
                       & Log-loss & 0.476 & 0.030 \\
\addlinespace
$\F_3$ (all 8 feat.)  & AUC      & 0.835 & 0.023 \\
                       & Brier    & 0.158 & 0.011 \\
                       & Accuracy & 0.767 & 0.022 \\
                       & Log-loss & 0.480 & 0.030 \\
\bottomrule
\end{tabular}
\end{table}

\begin{figure}[htbp]
\centering
\begin{subfigure}[t]{0.48\textwidth}
\includegraphics[width=\textwidth]{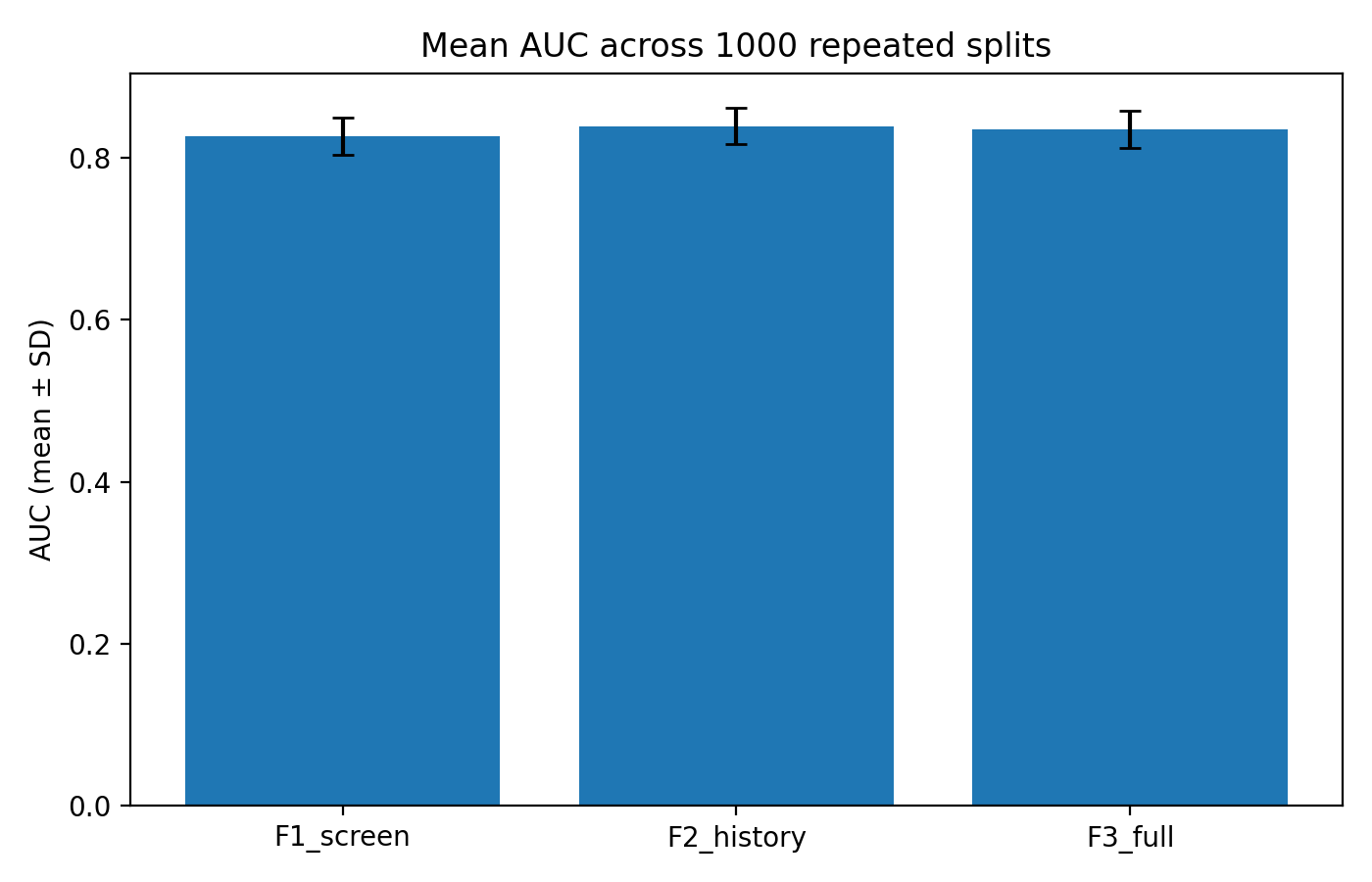}
\caption{Mean AUC at each stage.}
\end{subfigure}
\hfill
\begin{subfigure}[t]{0.48\textwidth}
\includegraphics[width=\textwidth]{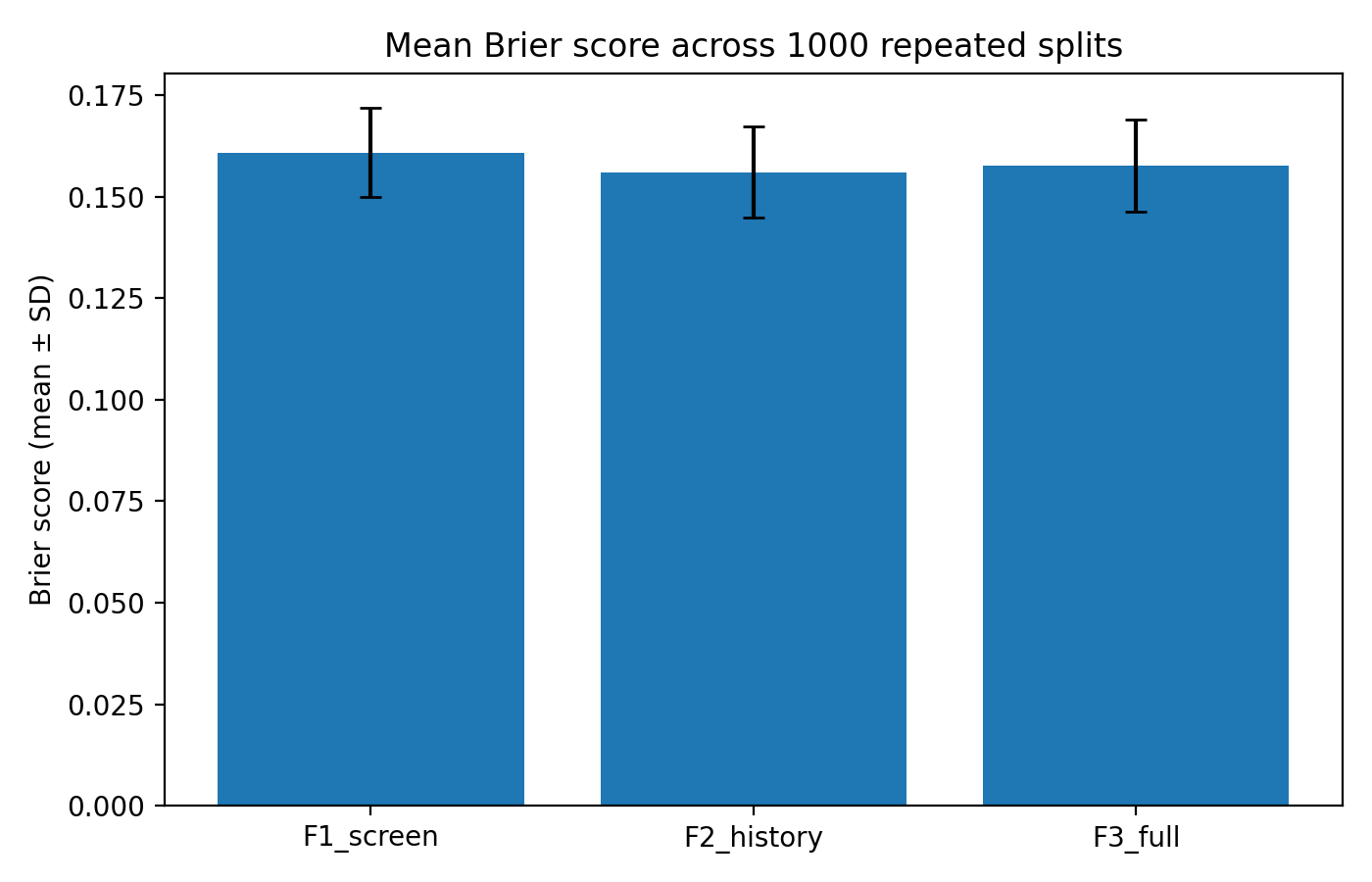}
\caption{Mean Brier score at each stage (lower is better).}
\end{subfigure}
\caption{Pima Diabetes (Study~3): stagewise predictive performance over
1{,}000 repeated experiments.  Unlike Studies~1 and~2, discrimination is
non-monotone: AUC peaks at $\F_2$ and declines slightly at the full-feature
stage $\F_3$, suggesting that the additional measurements add noise rather
than clinically useful signal.}
\label{fig:diabetes_performance}
\end{figure}

\subsubsection{Decision performance and optimal stopping}
\label{sec:diabetes_stop}

Table~\ref{tab:diabetes_decision} reports decision performance and total cost
under the cost ratio $(c_{\mathrm{FP}}, c_{\mathrm{FN}}) = (1, 5)$ and the
illustrative cumulative stage costs $(0, 0.01, 0.03)$.  As throughout the
paper, these stage costs are intended to summarize the additional burden of
further assessment rather than literal fees.

\begin{table}[htbp]
\centering
\caption{Decision performance and stopping analysis (Pima Diabetes, Study~3).
Test-cost schedule $(0, 0.01, 0.03)$.  $\dagger$~Minimum total cost.  The
literal message is that the modest gain at $\F_2$ is not enough to justify even
a small extra stage cost, and the full stage is worse still, so the preferred
action is to stop at $\F_1$.  The table converts stagewise risk updating into
the average consequences of early versus delayed action.}
\label{tab:diabetes_decision}
\renewcommand{\arraystretch}{1.15}
\begin{tabular}{lrrrrrr}
\toprule
Stage & Sensitivity & Specificity & Accuracy & Dec.\ loss & Test cost & Total \\
\midrule
$\F_1$ & 0.942 & 0.469 & 0.635 & 0.446 & 0.00 & 0.446$^\dagger$ \\
$\F_2$ & 0.934 & 0.504 & 0.655 & 0.437 & 0.01 & 0.447 \\
$\F_3$ & 0.929 & 0.502 & 0.652 & 0.448 & 0.03 & 0.478 \\
\bottomrule
\end{tabular}
\end{table}

The preferred stopping stage is $\F_1$: acting on the three basic screening
measurements (glucose, BMI, age) minimizes total expected cost under this
illustrative trade-off.
Although adding pregnancies and pedigree function at $\F_2$ reduces decision
loss from $0.446$ to $0.437$, the gain of $0.009$ is smaller than the added
stage cost of $0.01$.  At $\F_3$, decision loss rises back above the $\F_1$ level
(to $0.448$), so the full stage is not attractive even before additional stage
costs are considered.

Clinically, this suggests that the earliest screening information may already
be sufficient for the immediate action represented by the chosen loss function.
In this dataset, ordering the later measurements does not appear to change
management enough to justify waiting for them.  Read literally,
Table~\ref{tab:diabetes_decision} says that the clinician would usually make
the same action choice after the later measurements as after the initial
screen, so the extra measurements add little decision value under this trade-off.
The point is not that the cohort-level model at $\F_3$ is uninteresting; it is
that the individual patient's risk estimate often does not move enough to alter
the decision.

\begin{figure}[htbp]
\centering
\includegraphics[width=0.58\textwidth]{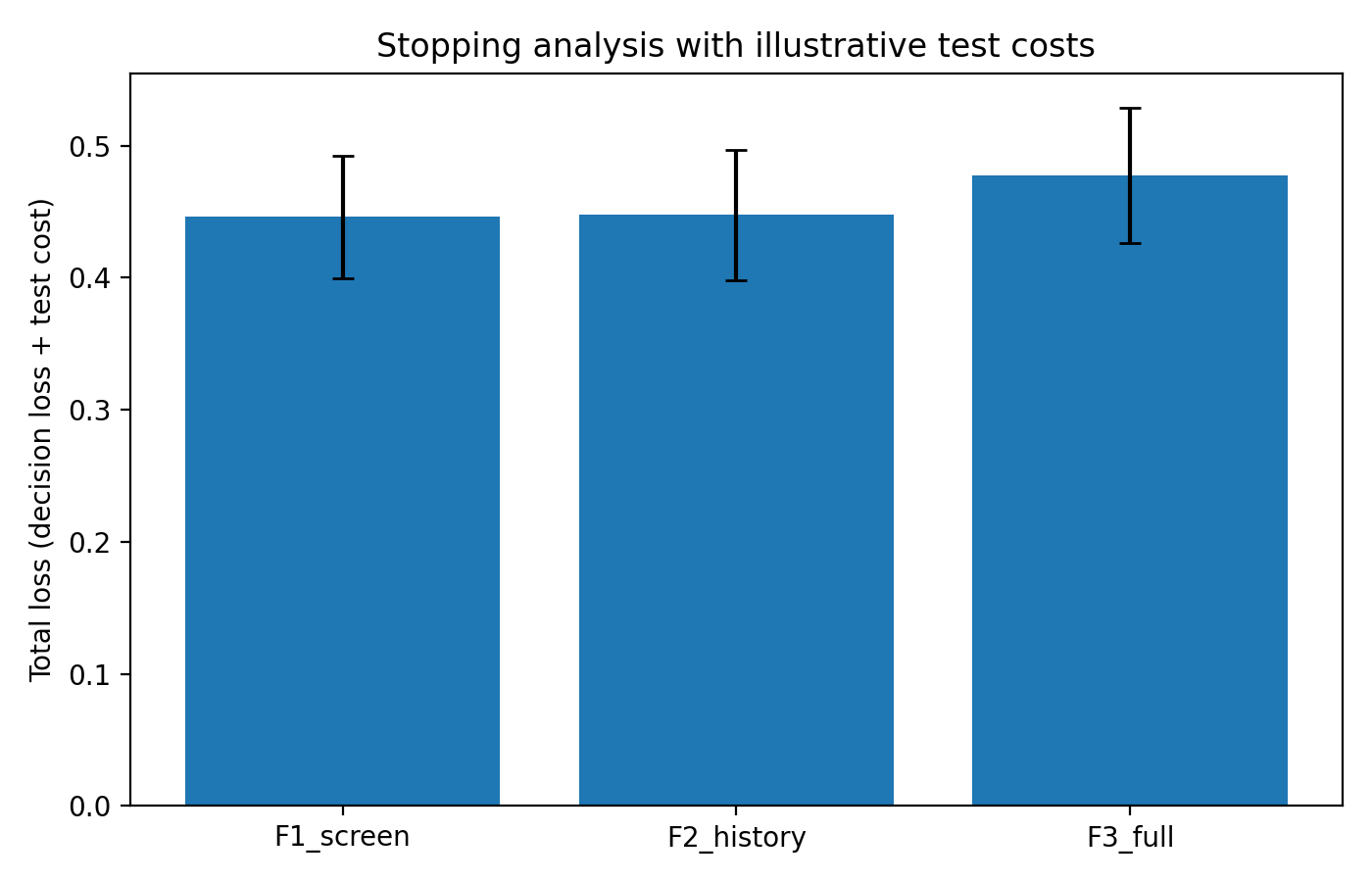}
\caption{Pima Diabetes (Study~3): total expected cost (decision loss +
cumulative test cost) by stage.  The screening stage $\F_1$ minimizes total
cost.  The gain from adding pregnancies and pedigree at $\F_2$ is smaller than
the added stage cost, and the full-feature stage $\F_3$ has higher decision
loss than $\F_1$ before any stage cost is added.}
\label{fig:diabetes_stopping}
\end{figure}

\subsection{Study 4: eICU Collaborative Research Database Demo --- Sequential Prediction of In-Hospital Mortality}
\label{sec:eicu}

\subsubsection{Dataset and staged structure}
\label{sec:eicu_data}

The eICU Collaborative Research Database Demo \citep{pollard2018eicu} provides
$n=2{,}359$ ICU admissions (after requiring at least one 24-hour vital-sign
observation) drawn from the 2014--2015 eICU-CRD and freely available from
PhysioNet without credentialed access.  In-hospital mortality is 8.6\%.  The
three information stages $\F_1 \subset \F_2 \subset \F_3$ are as follows:
$\F_1$ aggregates eight vital-sign means over the first 24 ICU hours; $\F_2$
adds sixteen first-available laboratory values; and $\F_3$ adds five static
demographic/admission variables (age, sex, ICU unit type, admission height and
weight).  This ordering is analytic (by incremental predictive value) rather
than strictly chronological; an alternative ordering is examined in
Appendix Section~\ref{sec:S4}.  We use $c_{\mathrm{FP}}=1$,
$c_{\mathrm{FN}}=10$ ($c^*=1/11\approx 0.091$), and the same illustrative
cumulative stage costs $(0,0.01,0.03)$ as in Studies~1 and~3.

\begin{table}[htbp]
\centering
\caption{Stagewise predictive performance over 1{,}000 repeated train/test
splits (eICU Demo, Study~4).  Mean and SD across repetitions.
$c^* \approx 0.091$ ($c_{\mathrm{FN}}=10$).}
\label{tab:eicu_stage}
\renewcommand{\arraystretch}{1.15}
\begin{tabular}{llrr}
\toprule
Stage & Metric & Mean & SD \\
\midrule
$\F_1$ (vitals, 8 feat.)        & AUC         & 0.699 & 0.032 \\
                                 & Brier       & 0.209 & 0.006 \\
                                 & Sensitivity & 0.989 & 0.012 \\
                                 & Specificity & 0.006 & 0.004 \\
                                 & Accuracy    & 0.091 & 0.004 \\
\addlinespace
$\F_2$ (vitals + labs, 24 feat.) & AUC         & 0.731 & 0.030 \\
                                 & Brier       & 0.190 & 0.007 \\
                                 & Sensitivity & 0.982 & 0.018 \\
                                 & Specificity & 0.021 & 0.011 \\
                                 & Accuracy    & 0.103 & 0.009 \\
\addlinespace
$\F_3$ (full, 29 feat.)          & AUC         & 0.770 & 0.027 \\
                                 & Brier       & 0.180 & 0.008 \\
                                 & Sensitivity & 0.980 & 0.017 \\
                                 & Specificity & 0.126 & 0.024 \\
                                 & Accuracy    & 0.200 & 0.021 \\
\bottomrule
\end{tabular}
\end{table}

\subsubsection{Stagewise predictive performance}
\label{sec:eicu_pred}

Table~\ref{tab:eicu_stage} reports mean predictive performance over $1{,}000$
repeated 70/30 train/test splits.  AUC improved monotonically from $0.699$ at
$\F_1$ to $0.731$ at $\F_2$ and $0.770$ at $\F_3$.  Brier score fell
correspondingly from $0.209$ to $0.190$ and $0.180$.  These improvements are
consistent and statistically stable (SDs of 0.027--0.032 on AUC).

\begin{figure}[htbp]
\centering
\begin{subfigure}[t]{0.88\textwidth}
\includegraphics[width=\textwidth]{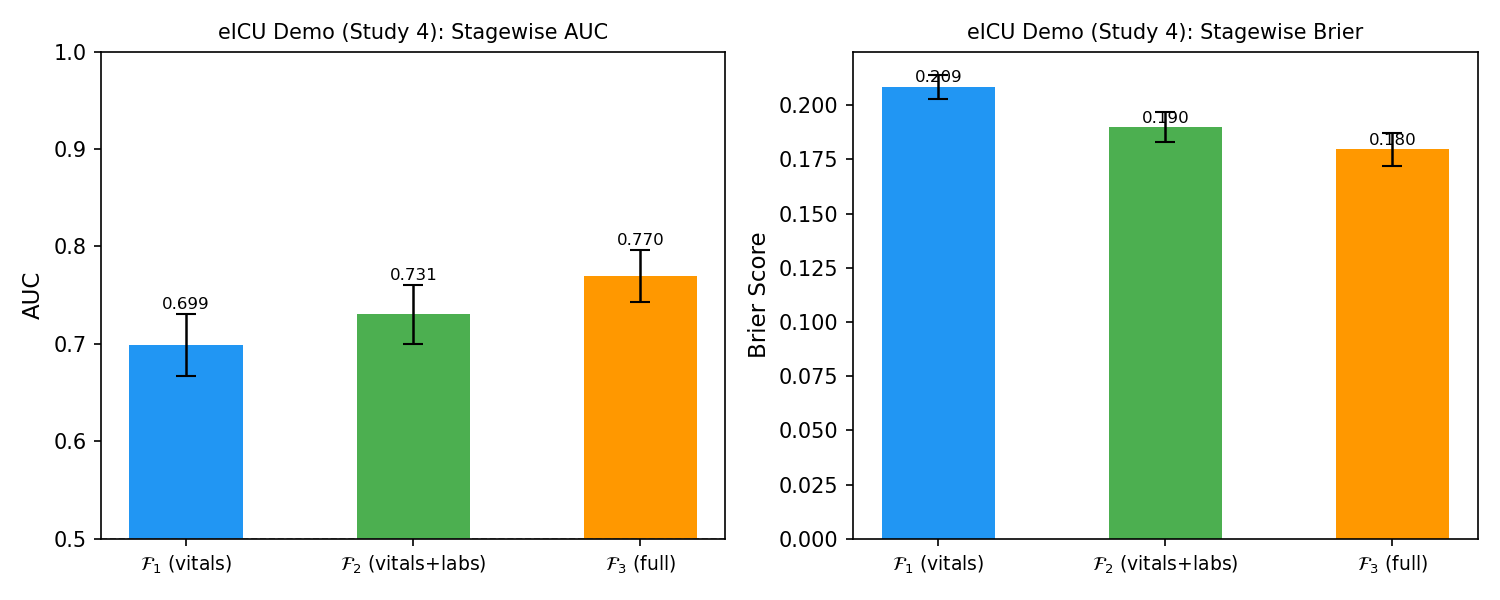}
\caption{Mean AUC and Brier at each stage (mean $\pm$ SD, 1{,}000 reps).}
\end{subfigure}
\caption{eICU Demo (Study~4): stagewise predictive performance.  AUC improves
monotonically and Brier score decreases monotonically as clinical information
accumulates.  Error bars represent $\pm$1 SD across 1{,}000 repeated splits.}
\label{fig:eicu_performance}
\end{figure}

Sensitivity is near-perfect at all stages but specificity is essentially zero at $\F_1$ and $\F_2$, rising meaningfully only at $\F_3$ (0.126), where demographic context enables identification of a low-risk subset.

\begin{table}[htbp]
\centering
\caption{Decision performance and stopping analysis (eICU Demo, Study~4).
Illustrative test-cost schedule $(0, 0.01, 0.03)$.  $\dagger$~Minimum total
cost.  The large decision losses at $\F_1$ and $\F_2$ reflect the near-zero
specificity: under the low threshold $c^*\approx 0.091$, almost every patient
is flagged at early stages, generating a high false-positive burden.  Only
at $\F_3$ does the demographic information enable a meaningful reduction in
unnecessary escalation.}
\label{tab:eicu_decision}
\renewcommand{\arraystretch}{1.15}
\begin{tabular}{lrrrrrr}
\toprule
Stage & Sensitivity & Specificity & Accuracy & Dec.\ loss & Test cost & Total \\
\midrule
$\F_1$ & 0.989 & 0.006 & 0.091 & 0.918 & 0.00 & 0.918 \\
$\F_2$ & 0.982 & 0.021 & 0.103 & 0.911 & 0.01 & 0.921 \\
$\F_3$ & 0.980 & 0.126 & 0.200 & 0.816 & 0.03 & 0.846$^\dagger$ \\
\bottomrule
\end{tabular}
\end{table}

\subsubsection{Decision performance and optimal stopping}
\label{sec:eicu_stopping}

Table~\ref{tab:eicu_decision} reports decision performance and total cost under
the stated parameters.  Expected decision loss falls monotonically from $0.918$
at $\F_1$ to $0.911$ at $\F_2$ and $0.816$ at $\F_3$.  Adding the cumulative
stage costs $(0,0.01,0.03)$ gives total costs $(0.918, 0.921, 0.846)$, so the
full-information stage $\F_3$ minimizes total cost.

The large reduction in decision loss from $\F_2$ to $\F_3$ ($0.911 \to
0.816$, a drop of $0.095$) far exceeds the stage cost of $0.02$, making
$\F_3$ the clear optimal stopping stage.  In contrast, moving from $\F_1$ to
$\F_2$ reduces decision loss by only $0.007$ while adding a stage cost of
$0.01$, making that transition marginally harmful from a cost-benefit
perspective.  In practical terms: for in-hospital mortality prediction under
this cost schedule, routine labs add little beyond vital signs alone, but
demographic context substantially reduces the rate of unnecessary escalation.

\begin{figure}[htbp]
\centering
\includegraphics[width=0.58\textwidth]{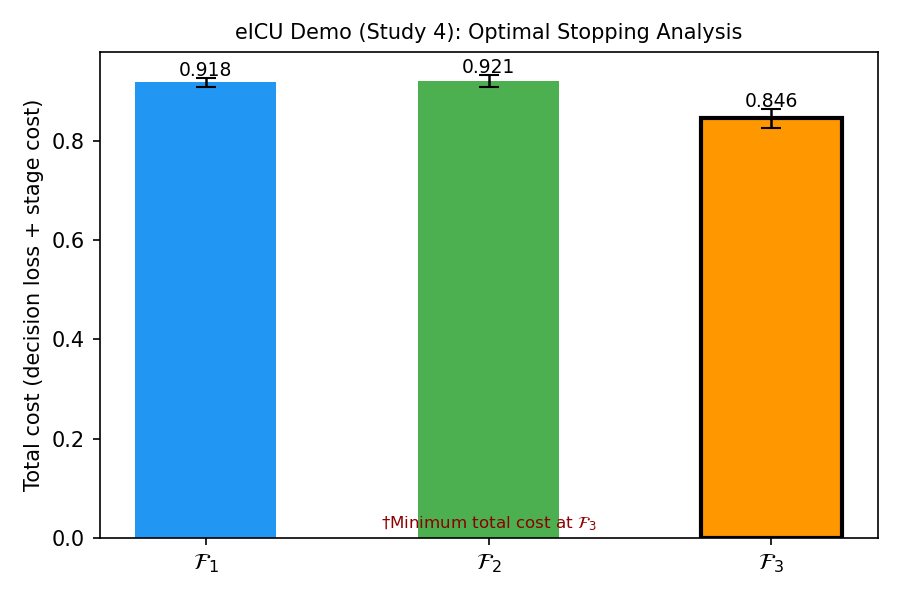}
\caption{eICU Demo (Study~4): total expected cost (decision loss + cumulative
test cost) by stage.  The full-information stage $\F_3$ minimizes total cost,
with the large drop in decision loss from $\F_2$ to $\F_3$ far outweighing
the stage cost.}
\label{fig:eicu_stopping}
\end{figure}

\section{Discussion}
\label{sec:discussion}

\subsection{What the optimal-stopping perspective adds to clinical prediction}
\label{sec:disc_framework}

Most clinical prediction papers report performance at a fixed feature set and
stop there.  The present framework is intended to support a different and more
clinically direct question: \emph{when should testing stop and action begin?}
Across the four applications, the answer differed materially by setting.  In
Study~1 (Breast Cancer Wisconsin), fuller information improved prediction
enough to justify the added stage cost.  In Study~2 (Cleveland Heart Disease),
later-stage testing improved prediction but not enough to justify routine
escalation beyond the exercise-test stage under the assumed trade-off.  In
Study~3 (Pima Diabetes), the earliest screening stage already supported the
lowest total cost.  In Study~4 (eICU), the full-information stage was clearly
preferred, with the final transition producing the largest decision-loss
reduction across all four studies.

These contrasts matter because they show that ``best predictive model'' and
``best stage to act'' are not the same claim.  From a clinical perspective, the
important output is not only whether performance improves, but whether the
improvement is large enough to change management in a worthwhile way.  That is
why optimal stopping is a useful highlight of the paper: it turns sequential
prediction from a descriptive exercise into a decision-analytic one.

Study~4 (eICU) adds a further dimension.  Here, the full-information stage is
clearly optimal, but for an instructive reason: early-stage models based on
vital signs and laboratories provide near-perfect sensitivity but essentially
no specificity under the low clinical threshold ($c^* \approx 0.091$).  The
decision loss at early stages is dominated by the cost of flagging every
patient as high risk.  It is the addition of demographic features that first
enables a meaningful fraction of patients to be correctly classified as low
risk, producing the largest drop in decision loss across all four studies
($0.095$ from $\F_2$ to $\F_3$).  This illustrates that the value of an
additional stage can be non-uniform: a transition that adds modest AUC
improvement may nonetheless produce a large decision gain if it specifically
improves threshold-relevant stratification.

A second point that deserves emphasis is the level at which the framework
operates.  Standard classification studies are usually population based and
cross-sectional: they estimate a mapping from features to outcome using one
fixed feature set, then summarize performance over the cohort.  Our framework
still relies on such data for estimation, but its interpretation is
subject-specific.  The object of interest is the evolving conditional risk for
an individual patient as additional information becomes available.  In that
sense, the martingale structure is not a decorative reformulation of ``more
variables improve prediction''; it encodes coherence of the patient's risk
trajectory across nested information sets.  By contrast, the reverse-martingale
idea is reserved for the different problem of projecting a richer predictor onto
a simpler clinical score or an earlier-stage summary.  The connections to
decision curve analysis and to the treatment-selection extension of the framework
are discussed in the appendix.

\subsection{Approximate vs.\ exact martingale behavior}
\label{sec:disc_approx}

The martingale property should be viewed here as an interpretive benchmark
rather than as a literal claim that fitted risks are exact conditional
expectations.  In practice, stagewise models are estimated from finite samples,
may be imperfectly calibrated, and may use variables that are only approximately
nested in the order observed in clinical care.  The drift diagnostic is useful
because it checks whether stage-to-stage updates behave broadly as expected
under a coherent information sequence.  In Studies~1--3, the drift values were small, suggesting that the staged
formulation is a reasonable description of the data even when later-stage
prediction is not monotone.  Study~4 exhibited systematic negative drift
($M\approx -0.03$ to $-0.04$ at both transitions), a striking contrast with
the near-zero drift in the first three studies.  This contrast is itself
clinically informative: it reflects a fundamental difference in the nature
of the staged information.  In Studies~1--3, later features refine a risk
estimate that is already in a reasonable range, so corrections are small and
symmetric.  In Study~4, the early-stage vital-sign model lacks the baseline
health information needed to stratify the low-risk tail, so it assigns nearly
everyone a predicted probability above the threshold; adding demographic
context systematically revises estimates downward for lower-risk patients.
The martingale benchmark is therefore useful not only as a coherence check
but as a diagnostic of \emph{what kind of information} each stage contributes:
approximate martingale behavior (Studies~1--3) indicates refinement of an
adequate early estimate, while systematic directional drift (Study~4)
indicates that early models are structurally limited and a later feature
block is doing qualitatively different work.  The drift diagnostic in Study~4
therefore serves to explain \emph{why} the final stage produces such a large
reduction in decision loss, rather than to flag incoherence.  The same caution applies to the reverse-martingale
language: it should be read as a statement about principled information
coarsening, not as a claim that forward clinical improvement itself runs in
reverse time.

\begin{remark}
\textbf{Forward and reverse martingales: a clinical summary.}
The individual patient's workup is a \emph{forward-information} problem: posterior
risk is updated under an increasing filtration, and optimal stopping determines
when one more measurement is no longer worth its burden.  Reverse-martingale
ideas apply separately, to \emph{deliberate compression} of rich information into
a simpler score; the reverse-martingale projection loss quantifies what is lost
by that simplification.
\end{remark}

\subsection{Limitations}
\label{sec:disc_limits}

\begin{enumerate}[leftmargin=1.8em,label=(\roman*),itemsep=2pt]
\item All four applications are retrospective, and stage definitions are
  analyst-constructed approximations to clinical workup pathways.  The stopping
  recommendations should be read as methodological illustrations rather than
  practice-ready decision rules.

\item The stage-cost values are illustrative summaries of burden, not formal
  health-economic estimates, and may not match the preferences of specific
  clinicians, patients, or healthcare systems.

\item The datasets are relatively small and, in some cases, old benchmark
  collections, and the framework is presently limited to binary outcomes and
  threshold-based decisions.  Extensions to time-to-event outcomes and richer
  treatment choices are important directions for future work.

\item The stopping recommendations are derived from cohort-average expected
  losses rather than true patient-specific Bellman values.  The patient-level
  bridge quantities in Table~\ref{tab:patient_level} provide a partial remedy,
  but fully individualized implementation would require a prospective design or
  a carefully validated counterfactual model.

\item In Study~4 (eICU), demographics placed at $\F_3$ are typically available
  at admission; the staging is therefore analytic rather than temporal.  A
  sensitivity ordering with demographics at $\F_1$ is presented in
  Appendix Section~\ref{sec:S4} and does not change the
  qualitative findings.
\end{enumerate}

\section{Conclusion}
\label{sec:conclusion}

This paper argues that \emph{optimal stopping} is a useful way to frame
sequential clinical prediction.  The central applied question is not only how
risk estimates change as more information becomes available, but when enough
information has been collected to justify action.  Across four retrospective
studies, the preferred stage for action differed meaningfully by clinical
setting: waiting for fuller information was worthwhile in Breast Cancer
Wisconsin and eICU in-hospital mortality, an intermediate stage was preferred
in Cleveland Heart Disease, and the earliest screening stage was preferred in
Pima Diabetes under the illustrative loss and cost assumptions.  Patient-level
stability analysis (Table~\ref{tab:patient_level}) provided a direct
interpretation: in Pima Diabetes, 98\% of patients already had a stable
decision recommendation at $\F_1$, explaining why further testing added
negligible value; in eICU, 11\% of patients changed their recommended action
at the $\F_2\to\F_3$ transition, generating the large observed reduction in
decision loss.

The eICU study further illustrates that AUC improvement does not automatically
translate into decision improvement.  Vital signs and lab values alone were
insufficient to identify low-risk ICU patients under a mortality-sensitive cost
schedule; demographic context was needed to produce meaningful specificity.
The optimal-stopping framework captured this nuance directly: the large
decision loss reduction at $\F_3$ justified the added stage cost despite modest
AUC gains.

That message is stronger than the familiar observation that more variables can
improve prediction.  The models are estimated from retrospective population
data, but the scientific interpretation is subject-specific.  The tables and
figures should be read as cohort-level summaries of an underlying patient-level
decision process: how the conditional risk for a given patient changes as the
workup unfolds, whether that change is coherent across stages, and whether it
is large enough to justify one more test before acting.  Forward martingale
structure captures that staged updating; reverse-martingale structure pertains
to deliberate coarsening of richer information into simpler summaries.

Emphasizing optimal stopping also clarifies why the empirical results are not
trivial.  The contribution is not merely that later models may fit better on
average.  It is that retrospective prediction models can be linked into an
interpretable subject-specific trajectory that supports a clinically important
decision: when to keep testing and when to stop.

\section*{Acknowledgements}

The author thanks the UCI Machine Learning Repository for hosting the Breast
Cancer Wisconsin, Cleveland Heart Disease, and Pima Diabetes datasets, and
PhysioNet for hosting the eICU Collaborative Research Database Demo.  This
work is part of the NSC 2026 RMRNN project.

\section*{Data availability}

All datasets are freely and publicly accessible.  The Breast Cancer Wisconsin,
Cleveland Heart Disease, and Pima Diabetes datasets are from the UCI ML
Repository (\url{https://archive.ics.uci.edu/}).  The eICU Demo is from
PhysioNet (\url{https://physionet.org/content/eicu-crd-demo/}) under a
Creative Commons license.  The analysis code is described in
Appendix Section~\ref{sec:S3}.

\section*{Conflict of interest}

The author declares no conflict of interest.

\section*{Appendix Overview}

The appendices are included in this manuscript.  Appendix~\ref{app:proofs}
contains the proofs of the main theoretical results.  The S-numbered appendix
sections contain the material previously prepared as supplementary material:
Section~\ref{sec:S1} gives the proof of Proposition~\ref{prop:regret}
(decision regret under posterior compression), Section~\ref{sec:S2} reports
the projection-loss analyses and martingale drift diagnostics for all four
studies, Section~\ref{sec:S3} lists the analysis scripts and software
dependencies, Section~\ref{sec:S4} presents stage-cost sensitivity analyses
for Studies~2 and~4 and an alternative staging order for the eICU study, and
Section~\ref{sec:S5} reports calibration slopes and intercepts for each
dataset and stage.
\bibliographystyle{unsrtnat}


\appendix

\section{Proofs of Main Results}
\label{app:proofs}

Throughout, $(\Omega,\mathcal{F},\Pbb)$ is a fixed probability space.
All conditional expectations are versions of the Radon--Nikod\'{y}m
derivative and are identified up to a.s.\ equivalence.
We write $L^p$ for $L^p(\Omega,\mathcal{F},\Pbb)$.
The standard references for martingale theory used here are
\citet{doob1953} and \citet{williams1991}; for optimal stopping,
\citet{chow1971os} and \citet{peskir2006}; for Bayes decision theory,
\citet{berger1985}.

\subsection{Proof of Proposition~\ref{prop:martingale}
(Posterior risk is a martingale)}
\label{proof:martingale}

\textbf{Integrability.}
Since $D\in\{0,1\}\subset L^\infty\subseteq L^1$, the conditional expectation
$X_t=\E[D\mid\F_t]$ is well-defined for each $t\ge 0$, with
$\E|X_t|\le\E|D|<\infty$.

\textbf{Martingale property.}
For $s<t$, $\F_s\subseteq\F_t$, so the tower property of conditional
expectation \citep[Ch.~9]{williams1991} gives
\[
   \E[X_t\mid\F_s]
   \;=\; \E\bigl[\E[D\mid\F_t]\bigm|\F_s\bigr]
   \;=\; \E[D\mid\F_s]
   \;=\; X_s \quad\text{a.s.}
\]

\textbf{Boundedness.}
Because $D\in\{0,1\}$, we have $X_t=\Pbb(D=1\mid\F_t)\in[0,1]$ a.s.,
so $(X_t)$ is bounded in $L^\infty$ and in particular uniformly integrable.

\textbf{Convergence.}
A uniformly integrable martingale converges a.s.\ and in $L^1$ to an
integrable limit \citep[Thm.~11.5]{williams1991} (Doob's $L^1$ convergence
theorem).  The limit is $X_\infty=\E[D\mid\F_\infty]$
a.s., where $\F_\infty=\bigvee_{t\ge 0}\F_t$, by L\'{e}vy's upward theorem
\citep[Thm.~14.2]{williams1991}. \qed

\subsection{Proof of Proposition~\ref{prop:revmart}
(Projected risk is a reverse martingale)}
\label{proof:revmart}

\textbf{Integrability.}
$Y_t=\E[D\mid\G_t]$ satisfies $\E|Y_t|\le\E|D|<\infty$ since $D\in L^1$.

\textbf{Reverse martingale property.}
Since $(\G_t)$ is decreasing ($\G_{t+1}\subseteq\G_t$), the tower property
gives
\[
   \E[Y_t\mid\G_{t+1}]
   \;=\; \E\bigl[\E[D\mid\G_t]\bigm|\G_{t+1}\bigr]
   \;=\; \E[D\mid\G_{t+1}]
   \;=\; Y_{t+1} \quad\text{a.s.,}
\]
which is the defining reverse-martingale identity
\citep[Sec.~14]{williams1991}.

\textbf{Boundedness and convergence.}
Because $D\in[0,1]$, the tower property gives $Y_t=\E[D\mid\G_t]\in[0,1]$
a.s., so the family $\{Y_t\}$ is bounded in $L^\infty$ and hence uniformly
integrable.  Doob's reverse martingale convergence theorem
\citep[Thm.~14.4]{williams1991} then guarantees $Y_t\to Y_\infty$ a.s.\ and
in $L^1$, where $Y_\infty=\E[D\mid\G_\infty]$ and
$\G_\infty=\bigcap_{t\ge 0}\G_t$. \qed

\subsection{Proof of Proposition~\ref{prop:projection_loss}
(Projection-loss decomposition)}
\label{proof:projection}

Assume $D\in L^2$; then $X_t,Y_t\in L^2$ as well
(Jensen: $\E[X_t^2]\le\E[D^2]<\infty$).

Since $\G_t\subseteq\F_t$, every $\G_t$-measurable function is
$\F_t$-measurable, so $Y_t$ is $\F_t$-measurable.  Write the decomposition
\[
   D-Y_t \;=\; (D-X_t)+(X_t-Y_t).
\]
The cross-term vanishes: because $D\in L^2$ and $X_t=\E[D\mid\F_t]$ is the
$L^2$-orthogonal projection of $D$ onto $L^2(\F_t)$
\citep[Ch.~9]{williams1991}, the residual $D-X_t$ is orthogonal to every
$\F_t$-measurable $L^2$ function.  In particular $X_t-Y_t\in L^2(\F_t)$,
so
\[
   \E\bigl[(D-X_t)(X_t-Y_t)\bigr]
   = \E\bigl[\E\bigl[(D-X_t)(X_t-Y_t)\bigm|\F_t\bigr]\bigr]
   = \E\bigl[(X_t-Y_t)\,\underbrace{\E[D-X_t\mid\F_t]}_{=\,0}\bigr]
   = 0.
\]
Expanding $\E[(D-Y_t)^2]$ and using this orthogonality yields
\[
   \E[(D-Y_t)^2]
   = \E[(D-X_t)^2]+2\,\underbrace{\E[(D-X_t)(X_t-Y_t)]}_{=\,0}
     +\E[(X_t-Y_t)^2]
   = \E[(D-X_t)^2]+\E[(X_t-Y_t)^2].
\]
Since $\E[(X_t-Y_t)^2]\ge 0$, the inequality $\E[(D-Y_t)^2]\ge\E[(D-X_t)^2]$
follows.  Rearranging gives the stated identity. \qed

\subsection{Proof of Proposition~\ref{prop:threshold}
(Threshold Bayes rule)}
\label{proof:threshold}

Under the stated zero-one loss with asymmetric costs, the conditional expected
loss of each action given $\F_t$ is:
\begin{align*}
\E[L(1,D)\mid\F_t]
   &= c_{\mathrm{FP}}\,\Pbb(D=0\mid\F_t)
    = c_{\mathrm{FP}}(1-X_t), \\
\E[L(0,D)\mid\F_t]
   &= c_{\mathrm{FN}}\,\Pbb(D=1\mid\F_t)
    = c_{\mathrm{FN}}\,X_t.
\end{align*}
Treating is preferred ($a_t^*=1$) if and only if
\[
   c_{\mathrm{FP}}(1-X_t) < c_{\mathrm{FN}}\,X_t
   \;\iff\;
   (c_{\mathrm{FP}}+c_{\mathrm{FN}})\,X_t > c_{\mathrm{FP}}
   \;\iff\;
   X_t > \frac{c_{\mathrm{FP}}}{c_{\mathrm{FP}}+c_{\mathrm{FN}}}
      \;=:\; c^*.
\]
When $X_t=c^*$ both actions incur equal expected loss; the convention
$a_t^*=0$ is arbitrary.  The acting cost then evaluates to
$\ell(X_t)=c_{\mathrm{FN}}X_t\,\ind\{X_t\le c^*\}
+c_{\mathrm{FP}}(1-X_t)\,\ind\{X_t>c^*\}$,
which is piecewise linear and Lipschitz with constant
$L=\max(c_{\mathrm{FP}},c_{\mathrm{FN}})$.
This is a standard result in Bayes decision theory;
see \citet[Ch.~1--2]{berger1985}. \qed

\subsection{Proof of Theorem~\ref{thm:dp}
(Bellman recursion for optimal clinical stopping)}
\label{proof:dp}

We work with a finite horizon $T$ (the number of clinical stages).
Extensions to countably infinite horizons are standard;
see \citet{chow1971os} and \citet{peskir2006}.

\textbf{Boundary condition.}
At the terminal stage $T$ the clinician must act, so
$V_T=\ell(X_T)$.

\textbf{Well-posedness by backward induction.}
Suppose $V_{t+1}$ is $\F_{t+1}$-measurable and integrable.  Define
\[
   V_t := \min\!\left\{\ell(X_t),\; c_t+\E[V_{t+1}\mid\F_t]\right\}.
\]
Because $X_t\in[0,1]$ the acting cost $\ell(X_t)$ is bounded, and by the
induction hypothesis $\E[V_{t+1}\mid\F_t]$ is integrable, so $V_t$ is
$\F_t$-measurable and integrable.  Proceeding from $t=T-1$ down to $t=0$
defines $(V_t)_{t=0}^T$ uniquely.

\textbf{$V_t$ is a lower bound.}
We show by backward induction that for any $(\F_t)$-stopping time $\tau$ with
$\tau\le T$, the total expected cost satisfies
$\E\bigl[\ell(X_\tau)+\sum_{s=0}^{\tau-1}c_s\bigr]\ge V_0$.

The base case $t=T$ holds because $V_T=\ell(X_T)$.
Suppose the bound holds from stage $t+1$ onward, meaning for any stopping
time $\tau\ge t+1$,
$\E\bigl[\ell(X_\tau)+\sum_{s=t+1}^{\tau-1}c_s\bigm|\F_{t+1}\bigr]\ge V_{t+1}$
a.s.  At stage $t$:
\begin{itemize}[leftmargin=1.8em,itemsep=2pt]
\item If $\tau=t$ (stop and act): total cost from $t$ is $\ell(X_t)\ge V_t$,
  since $V_t\le\ell(X_t)$ by the min in the Bellman recursion.
\item If $\tau>t$ (continue): the total cost from $t$ equals
  $c_t + \ell(X_\tau)+\sum_{s=t+1}^{\tau-1}c_s$.
  Taking conditional expectation given $\F_t$ and applying the induction
  hypothesis gives
  $\E\bigl[c_t+\ell(X_\tau)+\sum_{s=t+1}^{\tau-1}c_s\bigm|\F_t\bigr]
  \ge c_t+\E[V_{t+1}\mid\F_t]\ge V_t$,
  where the last inequality uses the Bellman min.
\end{itemize}
In both cases total cost $\ge V_t$, confirming the lower bound at $t$.

\textbf{$\tau^*$ achieves the lower bound.}
Define $\tau^*=\min\{t: V_t=\ell(X_t)\}$.  Before $\tau^*$, by definition
$V_t<\ell(X_t)$, so the min is achieved by the continuation value:
$V_t=c_t+\E[V_{t+1}\mid\F_t]$.  At $\tau^*$, $V_{\tau^*}=\ell(X_{\tau^*})$.
Therefore
\[
   V_0 = \E\!\left[\ell(X_{\tau^*})+\sum_{s=0}^{\tau^*-1}c_s\right],
\]
so $\tau^*$ attains the lower bound and is optimal. \qed

\clearpage
\setcounter{section}{0}
\setcounter{subsection}{0}
\setcounter{subsubsection}{0}
\setcounter{figure}{0}
\setcounter{table}{0}
\renewcommand{\thesection}{S\arabic{section}}
\renewcommand{\thesubsection}{\thesection.\arabic{subsection}}
\renewcommand{\thesubsubsection}{\thesubsection.\arabic{subsubsection}}
\renewcommand{\thefigure}{S\arabic{figure}}
\renewcommand{\thetable}{S\arabic{table}}
\makeatletter
\renewcommand{\theHsection}{supp.\arabic{section}}
\renewcommand{\theHsubsection}{supp.\arabic{section}.\arabic{subsection}}
\renewcommand{\theHsubsubsection}{supp.\arabic{section}.\arabic{subsection}.\arabic{subsubsection}}
\renewcommand{\theHfigure}{supp.figure.\arabic{figure}}
\renewcommand{\theHtable}{supp.table.\arabic{table}}
\makeatother

\section{Additional Proof}
\label{sec:S1}

Proofs of Propositions~\ref{prop:martingale}, \ref{prop:revmart},
\ref{prop:projection_loss}, \ref{prop:threshold}, and
Theorem~\ref{thm:dp} are collected in Appendix~\ref{app:proofs} of this manuscript.  The remaining proof is given below.

\subsection{Proof of Proposition~\ref{prop:regret}
(Decision regret under posterior compression)}
\label{proof:regret}

By the Lipschitz assumption on $\ell$ with constant $L$,
\[
   \ell(\hat{Y}_t) - \ell(\hat{X}_t)
   \;\le\; \bigl|\ell(\hat{Y}_t)-\ell(\hat{X}_t)\bigr|
   \;\le\; L\,|\hat{X}_t-\hat{Y}_t| \quad\text{a.s.}
\]
Taking expectations and using linearity of $\E$:
\[
   \E[\ell(\hat{Y}_t)]-\E[\ell(\hat{X}_t)] \;\le\; L\,\E[|\hat{X}_t-\hat{Y}_t|].
\]
For the threshold loss of Proposition~\ref{prop:threshold}, $\ell$ is piecewise
linear with slopes $c_{\mathrm{FN}}$ on $(0,c^*)$ and $-c_{\mathrm{FP}}$ on
$(c^*,1)$, so $L=\max(c_{\mathrm{FP}},c_{\mathrm{FN}})$.  The bound is tight
(equality holds when $\hat{X}_t$ and $\hat{Y}_t$ straddle $c^*$ in the worst
case). \qed

\section{Additional Study Results}
\label{sec:S2}

The projection-loss analyses and martingale drift diagnostics omitted from the main text are collected here for Studies~1--4.

\subsection{Study 1: Breast Cancer Wisconsin}

\subsubsection{Projection loss}
\label{sec:bcw_projection}

We compressed the full $\F_3$ model into one-component (PCA1) and
three-component (PCA3) principal-component representations and measured
projection loss relative to the full posterior.  Table~\ref{tab:bcw_proj} and
Figure~\ref{fig:bcw_projection} show that aggressive compression (PCA1) incurs
substantial projection loss (mean Prob-MSE $= 0.042$), whereas moderate
compression (PCA3) retains much more of the decision-relevant structure (mean
Prob-MSE $= 0.014$).

A clinically important detail is that the PCA1 decision loss ($0.184$)
exceeds the decision loss at the basic pathology stage $\F_1$ alone ($0.154$;
Table~\ref{tab:bcw_decision}).  In other words, an overly simplified score can
be less useful for management than using a smaller set of raw clinical features.
A score that preserves high AUC is not automatically adequate if it distorts
which patients lie above or below the treatment threshold.

\begin{table}[htbp]
\centering
\caption{Projection and compression results (Study~1, 1{,}000 repetitions).
Prob-MSE vs.\ Full = mean$(\hat{X}_3-\hat{Y})^2$ averaged over the test set.}
\label{tab:bcw_proj}
\renewcommand{\arraystretch}{1.15}
\begin{tabular}{llrr}
\toprule
Model & Metric & Mean & SD \\
\midrule
Full (30 features) & AUC           & 0.9947 & 0.0041 \\
Full               & Brier         & 0.0203 & 0.0063 \\
Full               & Decision loss & 0.0860 & 0.0315 \\
Full               & Prob-MSE      & 0.0000 & 0.0000 \\
\addlinespace
PCA3 (3 components) & AUC          & 0.9903 & 0.0052 \\
PCA3                & Brier        & 0.0343 & 0.0079 \\
PCA3                & Decision loss & 0.1071 & 0.0324 \\
PCA3                & Prob-MSE     & 0.0143 & 0.0040 \\
\addlinespace
PCA1 (1 component) & AUC           & 0.9697 & 0.0109 \\
PCA1               & Brier         & 0.0637 & 0.0119 \\
PCA1               & Decision loss & 0.1836 & 0.0518 \\
PCA1               & Prob-MSE      & 0.0419 & 0.0087 \\
\bottomrule
\end{tabular}
\end{table}

\begin{figure}[htbp]
\centering
\begin{subfigure}[t]{0.48\textwidth}
\includegraphics[width=\textwidth]{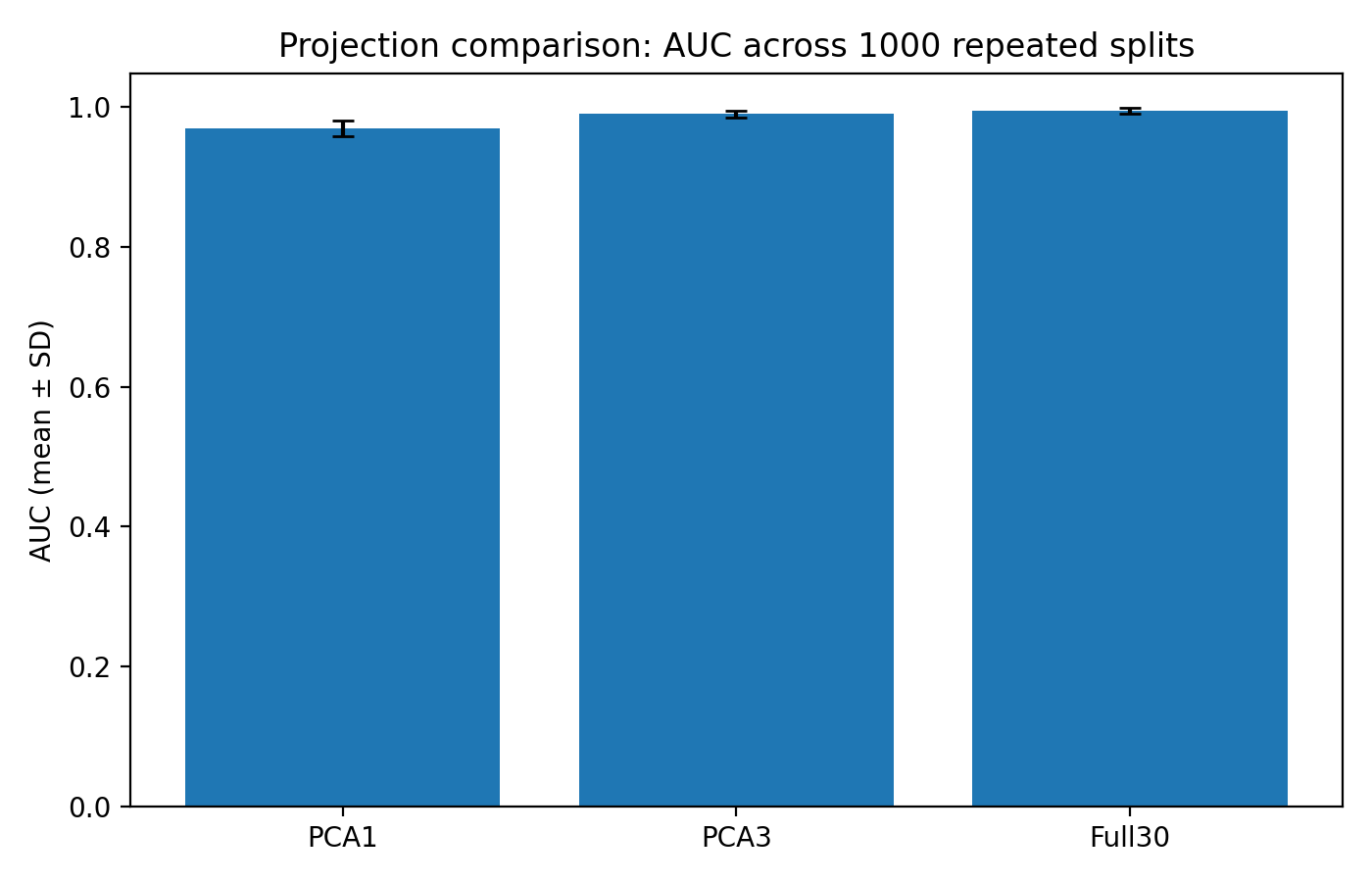}
\caption{AUC by compression level.}
\end{subfigure}
\hfill
\begin{subfigure}[t]{0.48\textwidth}
\includegraphics[width=\textwidth]{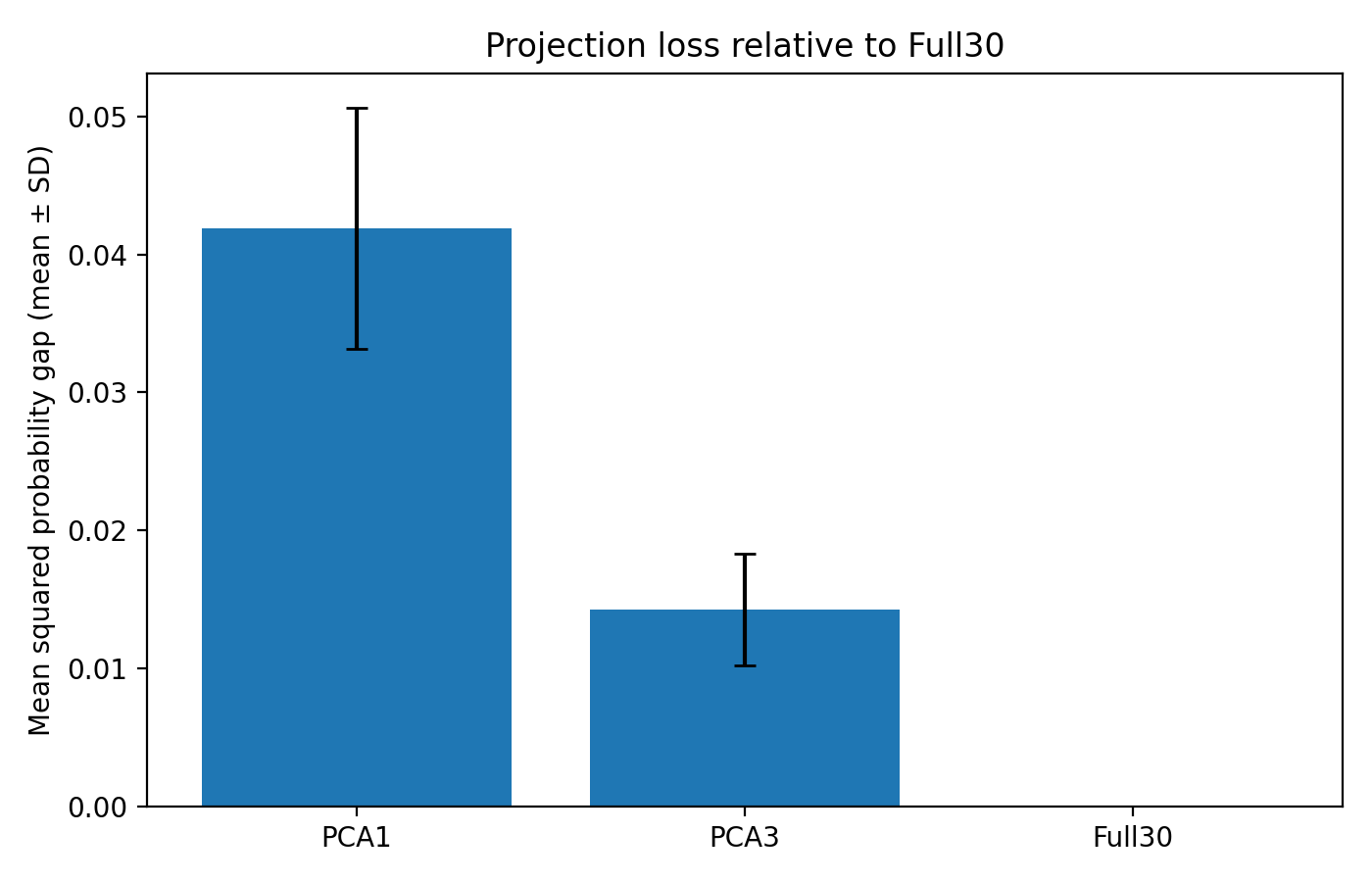}
\caption{Projection loss (Prob-MSE vs.\ full model).}
\end{subfigure}
\caption{Study~1: projection and compression.  PCA3 preserves most of the
full-information posterior structure; PCA1 incurs a substantial loss.}
\label{fig:bcw_projection}
\end{figure}

\subsubsection{Martingale drift diagnostic}
\label{sec:bcw_drift}

The weighted mean conditional drift was $-0.0025$ (SD $0.0083$) for the
$\F_1\to\F_2$ transition and $-0.0011$ (SD $0.0102$) for $\F_2\to\F_3$.
The weighted squared conditional drift was $0.00073$ and $0.00089$,
respectively.  These values are small relative to the risk scale $[0,1]$ and
are consistent with approximately coherent updating across stages.

In practical terms, the later pathology information appears to refine the risk
estimate rather than introduce erratic changes unrelated to the staged clinical
workup.

\begin{figure}[htbp]
\centering
\includegraphics[width=0.58\textwidth]{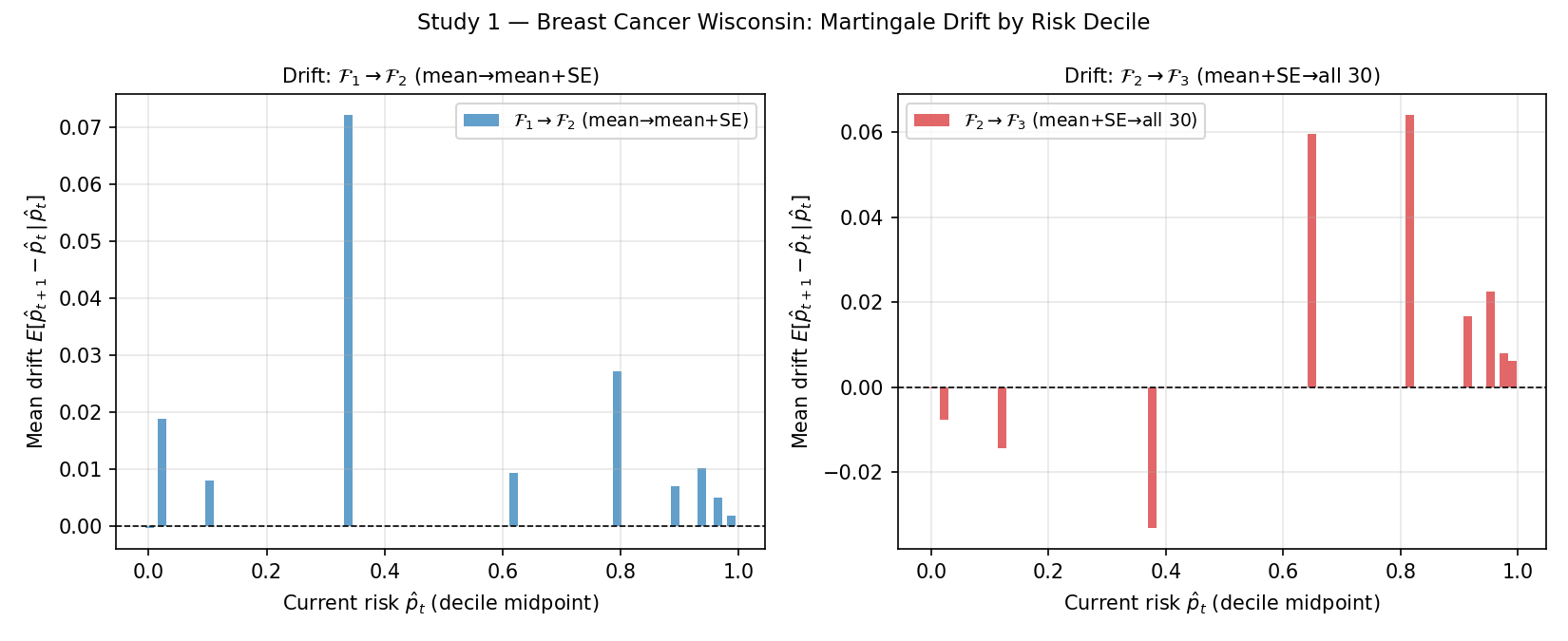}
\caption{Study~1: martingale drift diagnostic by risk decile.  Each bar shows
$E[\hat{p}_{t+1}-\hat{p}_t \,|\, \hat{p}_t \in \text{bin}_k]$ for ten
quantile bins of the current risk estimate.  The left panel ($\F_1\to\F_2$)
shows that positive drift concentrates in the intermediate-risk zone (around
0.3--0.4), reflecting that adding variability features raises estimates for
patients whose mean features placed them in a boundary region.  The right
panel ($\F_2\to\F_3$) shows both upward corrections in high-risk patients and
small downward corrections at the margin.  Global weighted mean drift is near
zero at both transitions, consistent with approximately coherent updating.}
\label{fig:bcw_drift}
\end{figure}


\subsection{Study 2: Cleveland Heart Disease}

\subsubsection{Projection loss}
\label{sec:cleveland_proj}

Table~\ref{tab:cleveland_proj} reports projection loss under PCA compression of
the full 13-feature model.

\begin{table}[htbp]
\centering
\caption{Projection loss (Cleveland Heart Disease, Study~2, 1{,}000 repetitions).
Prob-MSE vs.\ Full${}={}$mean $(\hat{X}_3 - \hat{Y})^2$ on test set.}
\label{tab:cleveland_proj}
\renewcommand{\arraystretch}{1.15}
\begin{tabular}{llrr}
\toprule
Model & Metric & Mean & SD \\
\midrule
Full (13 feat.) & AUC           & 0.898 & 0.028 \\
                & Brier         & 0.128 & 0.021 \\
                & Decision loss & 0.416 & 0.096 \\
                & Prob-MSE      & 0.000 & 0.000 \\
\addlinespace
PCA3 (3 comp.) & AUC            & 0.893 & 0.028 \\
               & Brier          & 0.131 & 0.019 \\
               & Decision loss  & 0.461 & 0.091 \\
               & Prob-MSE       & 0.019 & 0.006 \\
\addlinespace
PCA1 (1 comp.) & AUC            & 0.890 & 0.028 \\
               & Brier          & 0.134 & 0.019 \\
               & Decision loss  & 0.433 & 0.095 \\
               & Prob-MSE       & 0.024 & 0.006 \\
\bottomrule
\end{tabular}
\end{table}

Both compressed representations retain high AUC (above $0.890$), but the
decision loss for PCA3 ($0.461$) is higher than for PCA1 ($0.433$) despite its
lower probability MSE.  With only 13 features and a relatively small sample,
the three-component regression may be more vulnerable to estimation noise near
the treatment threshold.  The broader applied message is unchanged: a simplified
score can appear statistically similar to the full model while still changing
the clinical decision in nontrivial ways.

\begin{figure}[htbp]
\centering
\begin{subfigure}[t]{0.48\textwidth}
\includegraphics[width=\textwidth]{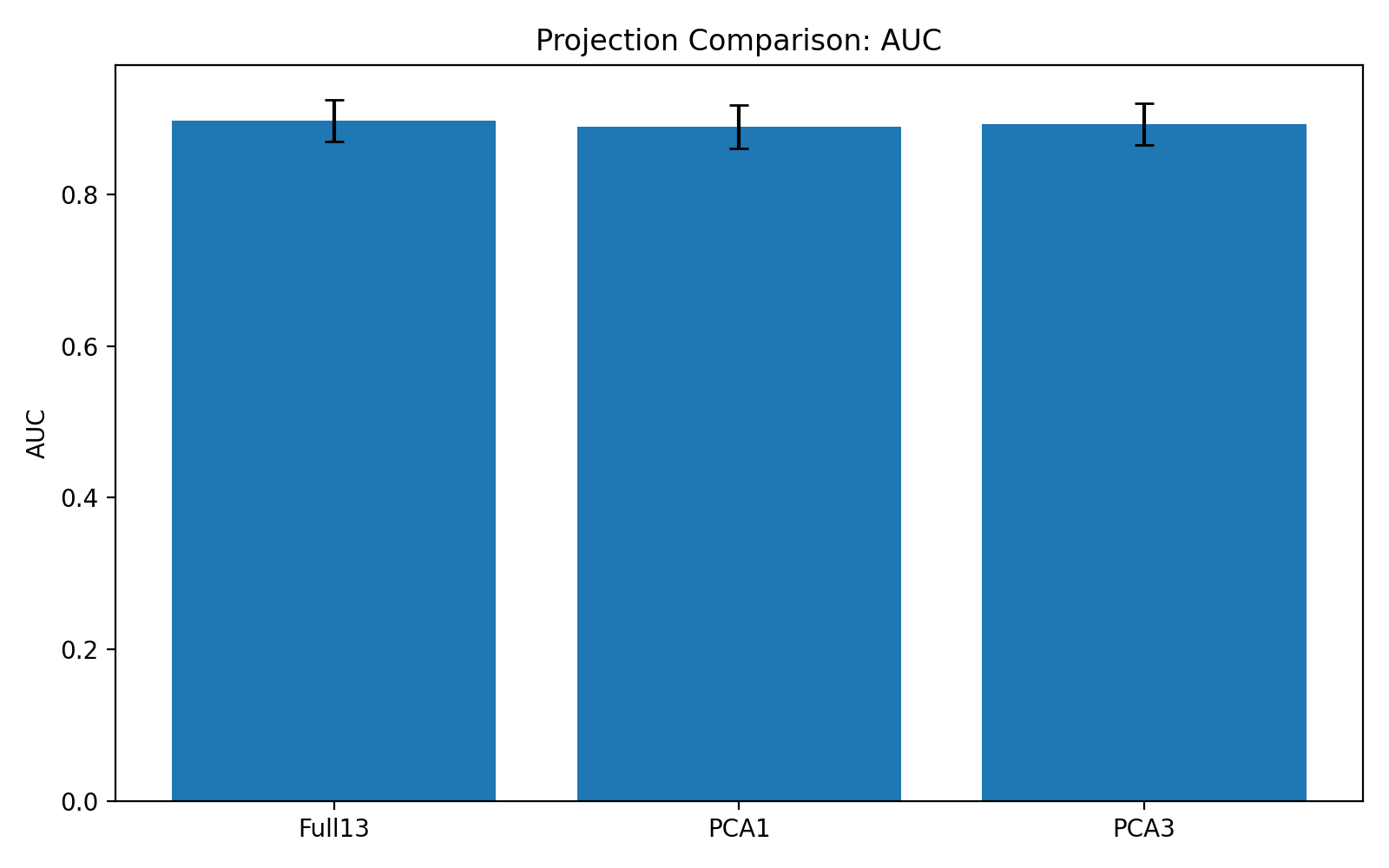}
\caption{AUC by compression level.}
\end{subfigure}
\hfill
\begin{subfigure}[t]{0.48\textwidth}
\includegraphics[width=\textwidth]{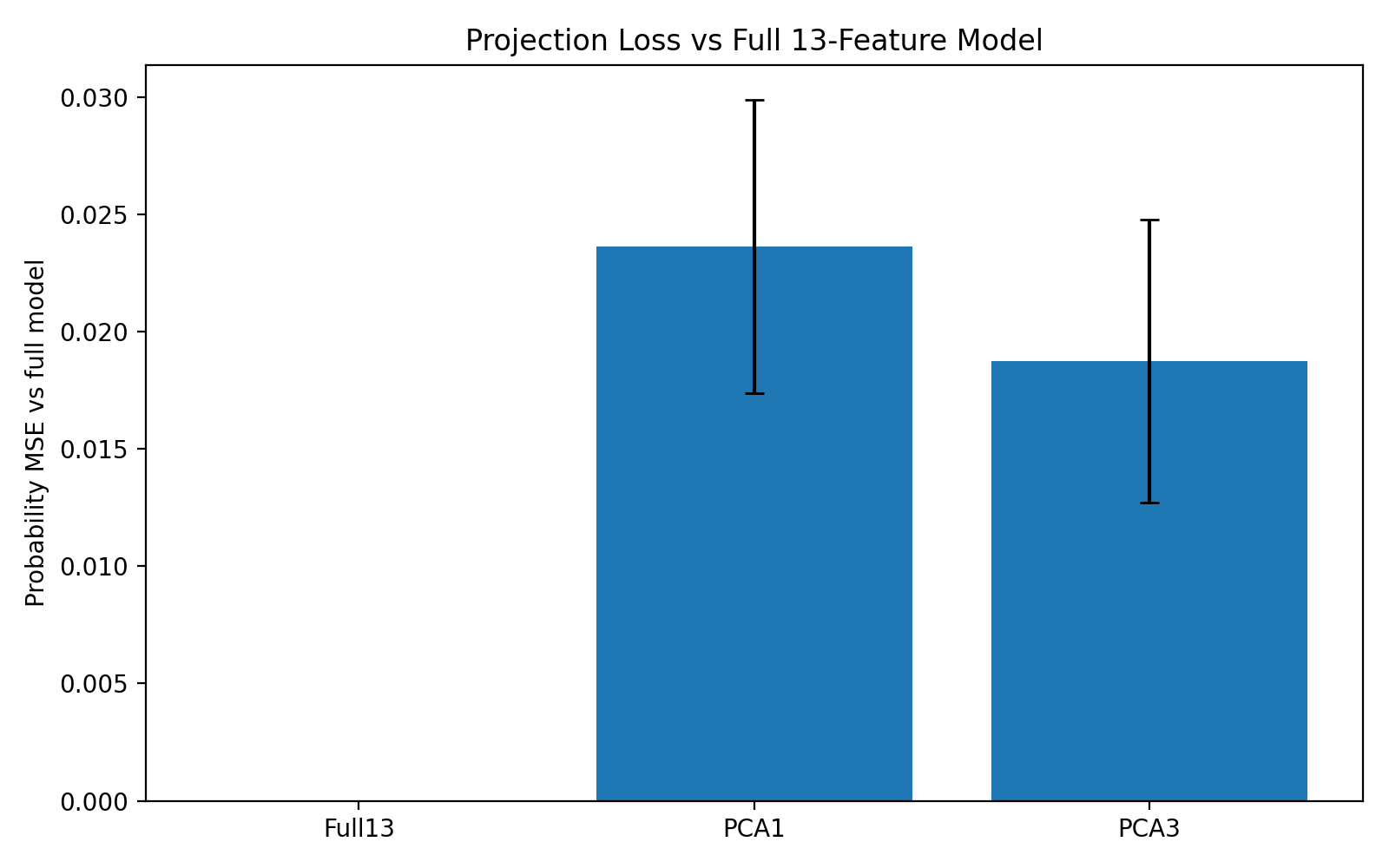}
\caption{Projection loss (Prob-MSE vs.\ full model).}
\end{subfigure}
\caption{Study~2: projection and compression.  PCA3 retains most of the AUC
but incurs higher decision loss than PCA1, illustrating that probability MSE
and threshold-based decision quality do not always align.}
\label{fig:cleveland_projection}
\end{figure}

\subsubsection{Martingale drift diagnostic}
\label{sec:cleveland_drift}

Figure~\ref{fig:cleveland_drift} shows the decile-level drift diagnostic.
The $\F_1\to\F_2$ panel reveals alternating positive and negative corrections
across the risk range, with larger magnitudes in the intermediate and
high-risk deciles.  The $\F_2\to\F_3$ panel shows a prominent positive spike
near risk 0.6: patients whose exercise-test posterior sits around 60\% tend to
have their estimate revised upward by the imaging/invasive features, while
lower- and higher-risk patients show smaller corrections.  The weighted mean
conditional drift across all bins is near zero at both transitions (global
$M_t \approx 0$), but the per-decile heterogeneity is nontrivial.  The
preferred stopping stage ($\F_2$) is driven by the cost-benefit trade-off
rather than by any systematic incoherence in the risk updates, but the drift
figure reveals that the direction of updating depends materially on where the
patient's risk estimate sits.

\begin{figure}[htbp]
\centering
\includegraphics[width=0.58\textwidth]{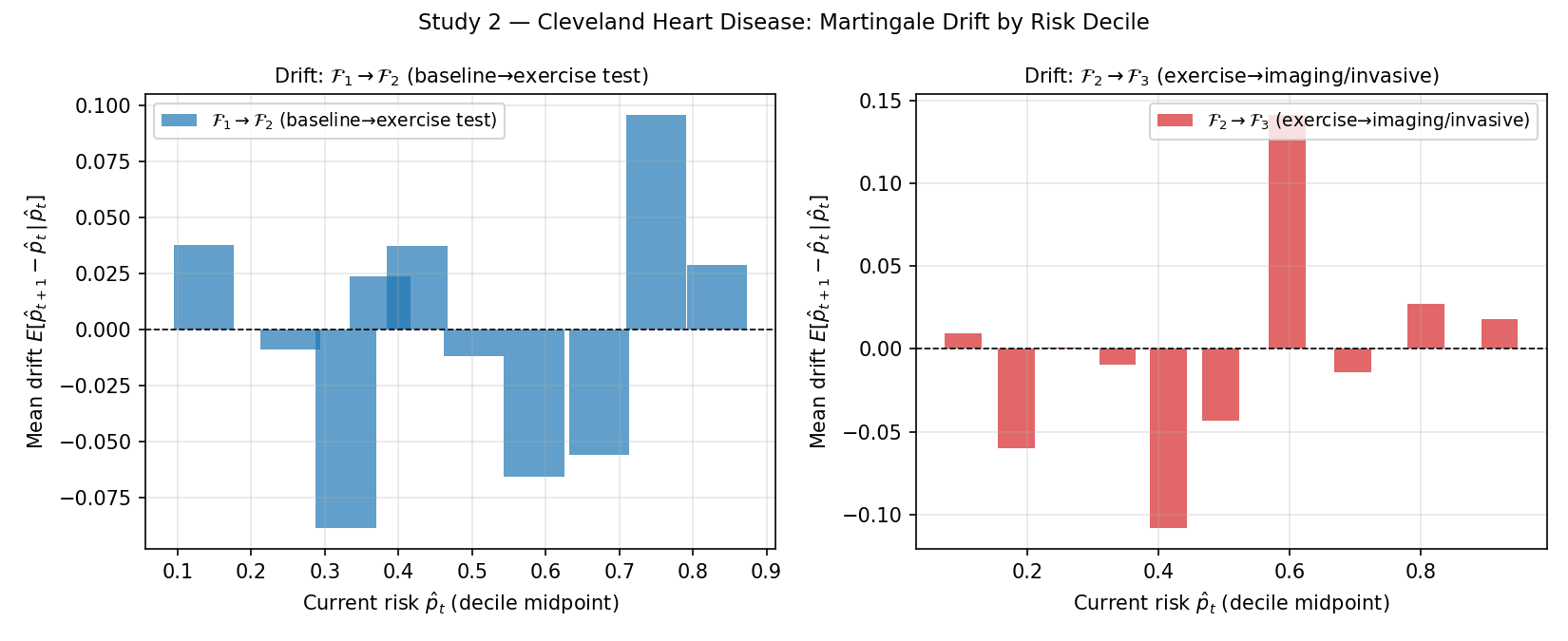}
\caption{Study~2: martingale drift diagnostic by risk decile.  Each bar shows
$E[\hat{p}_{t+1}-\hat{p}_t \,|\, \hat{p}_t \in \text{bin}_k]$.  The
$\F_1\to\F_2$ panel (left) shows alternating corrections across the risk
spectrum; the $\F_2\to\F_3$ panel (right) shows a pronounced positive spike
near risk 0.6, indicating that patients with intermediate posterior risk after
exercise testing tend to have their estimates revised upward by
imaging/invasive information.  The global weighted mean drift is near zero at
both transitions, but the heterogeneity by decile reveals that the martingale
approximation is more accurate at the risk extremes than in the middle range.}
\label{fig:cleveland_drift}
\end{figure}


\subsection{Study 3: Pima Diabetes}

\subsubsection{Projection loss}
\label{sec:diabetes_proj}

Table~\ref{tab:diabetes_proj} and Figure~\ref{fig:diabetes_projection} report
projection loss.  PCA compression of the $\F_3$ model shows that PCA3
(Prob-MSE $= 0.015$) retains substantially more posterior information than PCA1
(Prob-MSE $= 0.021$).  Decision loss under PCA3 ($0.449$) is nearly identical
to that of the full model ($0.448$), suggesting that a modestly simplified
summary can preserve most of the information needed for the threshold-based
decision, even when the full measurement set itself is not clearly superior to
earlier stages.

\begin{table}[htbp]
\centering
\caption{Projection loss (Pima Diabetes, Study~3, 1{,}000 repetitions).}
\label{tab:diabetes_proj}
\renewcommand{\arraystretch}{1.15}
\begin{tabular}{llrr}
\toprule
Model & Metric & Mean & SD \\
\midrule
Full (8 feat.)  & AUC           & 0.835 & 0.023 \\
                & Brier         & 0.158 & 0.011 \\
                & Decision loss & 0.448 & 0.051 \\
                & Prob-MSE      & 0.000 & 0.000 \\
\addlinespace
PCA3 (3 comp.) & AUC            & 0.815 & 0.023 \\
               & Brier          & 0.169 & 0.010 \\
               & Decision loss  & 0.449 & 0.042 \\
               & Prob-MSE       & 0.015 & 0.003 \\
\addlinespace
PCA1 (1 comp.) & AUC            & 0.797 & 0.024 \\
               & Brier          & 0.175 & 0.010 \\
               & Decision loss  & 0.459 & 0.042 \\
               & Prob-MSE       & 0.021 & 0.003 \\
\bottomrule
\end{tabular}
\end{table}

\begin{figure}[htbp]
\centering
\begin{subfigure}[t]{0.48\textwidth}
\includegraphics[width=\textwidth]{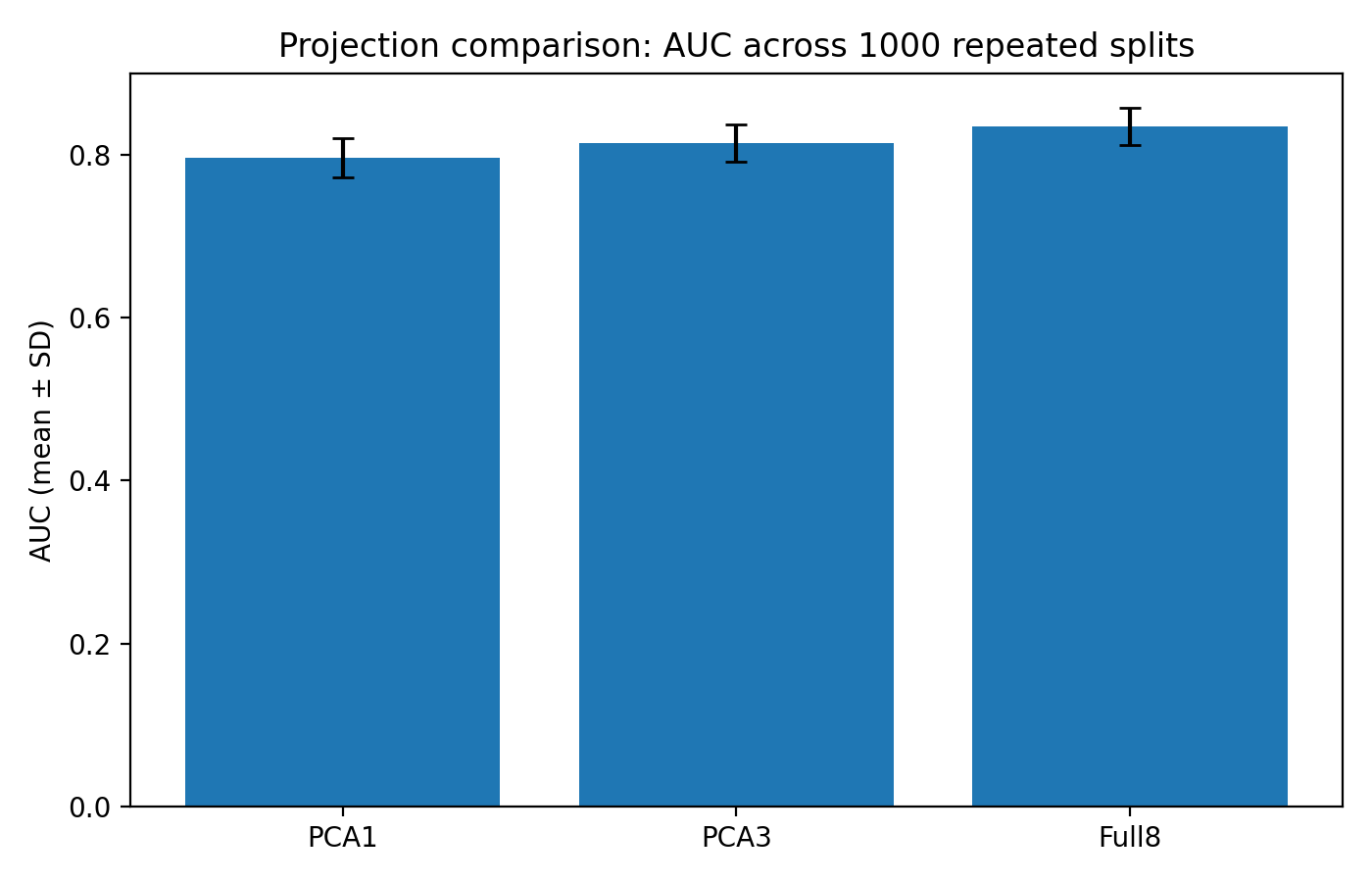}
\caption{AUC by compression level.}
\end{subfigure}
\hfill
\begin{subfigure}[t]{0.48\textwidth}
\includegraphics[width=\textwidth]{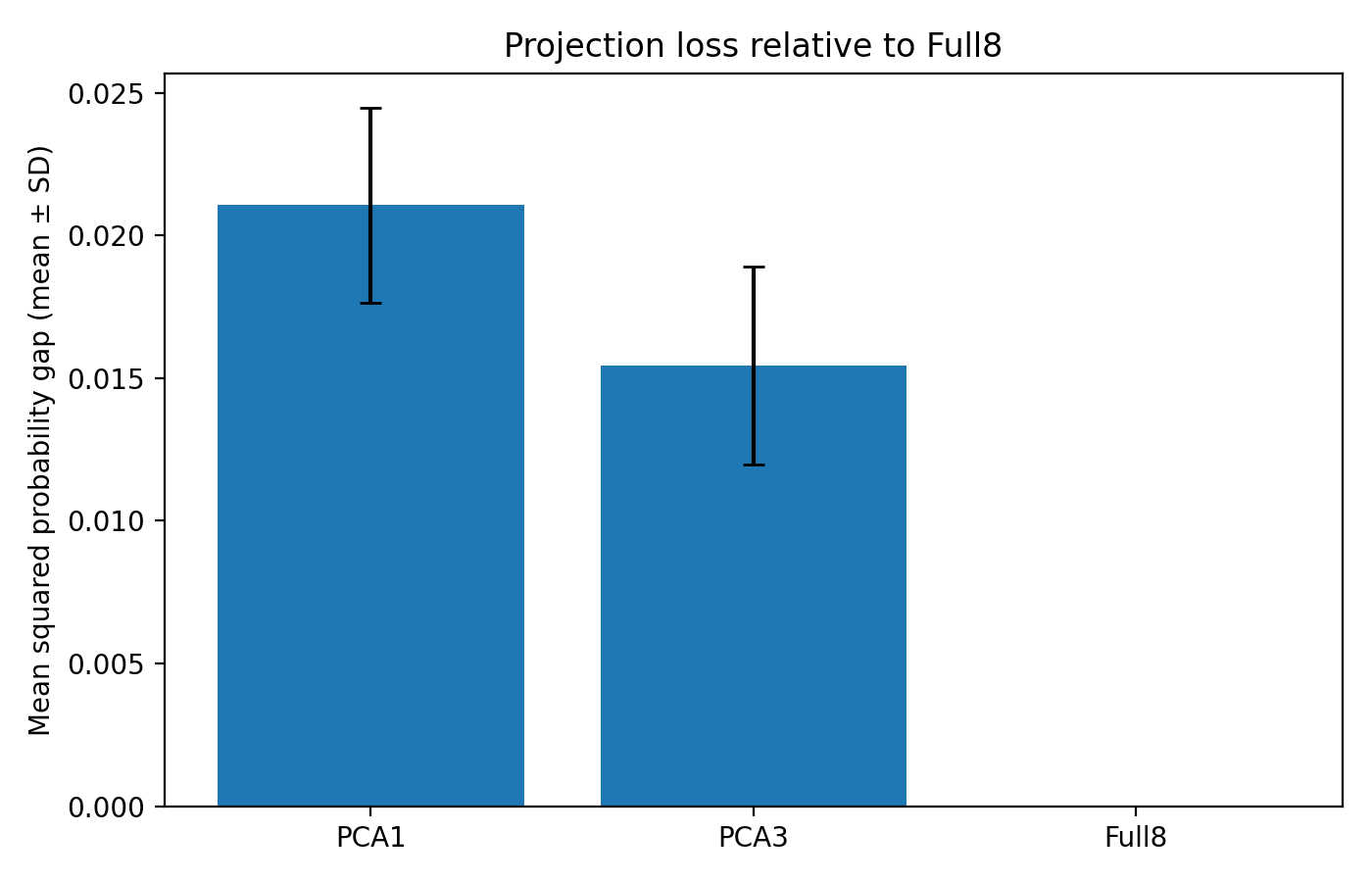}
\caption{Projection loss (Prob-MSE vs.\ full model).}
\end{subfigure}
\caption{Study~3: projection and compression.  PCA3 incurs lower probability
MSE than PCA1 and achieves nearly the same decision loss as the full model,
suggesting that a three-component summary preserves most of the
threshold-relevant information in this dataset.}
\label{fig:diabetes_projection}
\end{figure}

\subsubsection{Martingale drift diagnostic}
\label{sec:diabetes_drift}

The weighted mean conditional drift was $-4.4\times 10^{-5}$ (SD $0.006$) for
the $\F_1\to\F_2$ transition and $+5.9\times 10^{-5}$ (SD $0.002$) for
$\F_2\to\F_3$.  Both values are negligible, indicating approximately coherent
risk updating despite the non-monotone AUC pattern.  In practical terms, the
problem here is not erratic stage ordering but limited incremental value of the
later measurements for the chosen clinical decision.

\begin{figure}[htbp]
\centering
\includegraphics[width=0.98\textwidth]{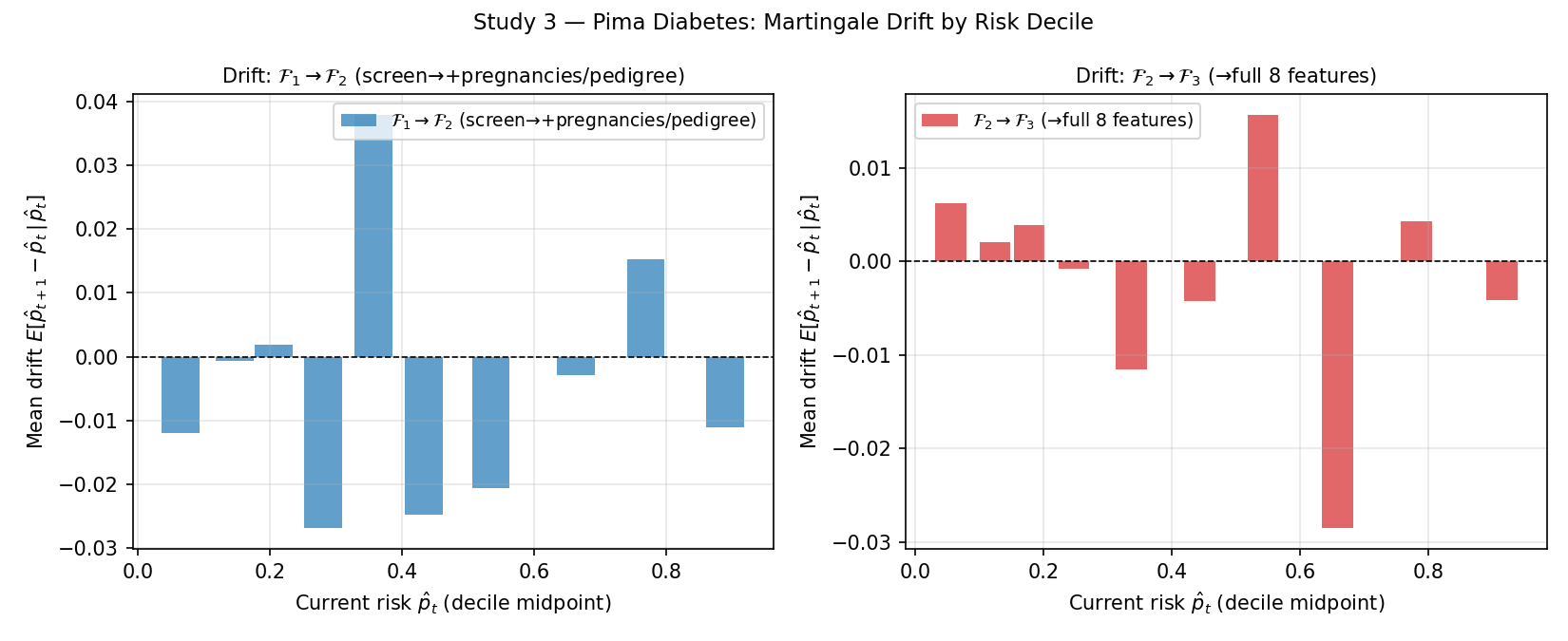}
\caption{Study~3: martingale drift diagnostic by risk decile.  Each bar shows
$E[\hat{p}_{t+1}-\hat{p}_t \,|\, \hat{p}_t \in \text{bin}_k]$.  Both panels
show oscillating positive/negative corrections of modest magnitude across
deciles, with the largest individual bars below $|0.04|$ in absolute value.
The pattern is consistent with the non-monotone AUC: the later features
neither systematically raise nor lower risk estimates in a directional way,
confirming that the issue is limited incremental information content rather
than incoherent updating.}
\label{fig:diabetes_drift}
\end{figure}



\subsection{Study 4: eICU Demo}

\subsubsection{Projection loss}
\label{sec:eicu_proj}

Table~\ref{tab:eicu_proj} reports the stagewise projection loss.  The
$\F_1\to\F_2$ projection loss ($0.0182$, SD $0.0033$) and $\F_2\to\F_3$
projection loss ($0.0209$, SD $0.0034$) are both modest, indicating that each
stage transition revises risk estimates by a small amount on average.  The
$\F_2\to\F_3$ loss is slightly larger than $\F_1\to\F_2$, consistent with the
$\F_3$ stage providing the more influential revision (the one that actually
changes the decision for a meaningful fraction of patients).

\begin{table}[htbp]
\centering
\caption{Stagewise projection loss (eICU Demo, Study~4, 1{,}000 repetitions).
$\widehat{\mathrm{PL}}_{t} = n_{\mathrm{te}}^{-1}\sum_i(\hat{p}_{t,i} -
\hat{p}_{t+1,i})^2$, averaged over the test set.}
\label{tab:eicu_proj}
\renewcommand{\arraystretch}{1.15}
\begin{tabular}{lrr}
\toprule
Transition & Projection loss (mean) & SD \\
\midrule
$\F_1 \to \F_2$ & 0.0182 & 0.0033 \\
$\F_2 \to \F_3$ & 0.0209 & 0.0034 \\
\bottomrule
\end{tabular}
\end{table}

\begin{figure}[htbp]
\centering
\includegraphics[width=0.58\textwidth]{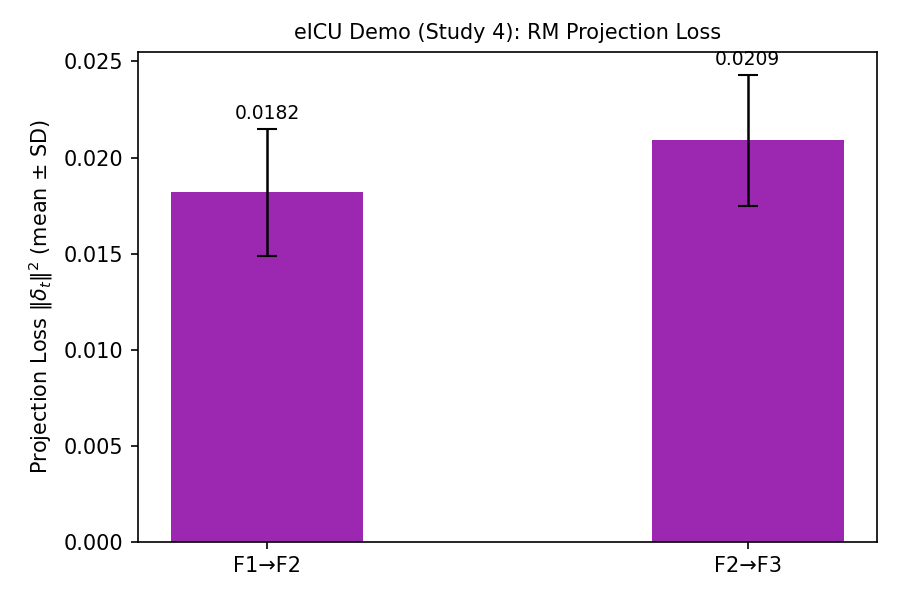}
\caption{eICU Demo (Study~4): reverse-martingale projection loss across stage
transitions.  The $\F_2\to\F_3$ transition has slightly higher projection loss,
consistent with the larger decision improvement at that transition.}
\label{fig:eicu_proj}
\end{figure}

\subsubsection{Martingale drift diagnostic}
\label{sec:eicu_drift}

The decile-level drift diagnostic is shown in Figure~\ref{fig:eicu_drift}.
Both the $\F_1\to\F_2$ and $\F_2\to\F_3$ panels show a consistent pattern of
\emph{negative} drift: $\hat{p}_{t+1} < \hat{p}_t$ on average within each
risk decile.  The weighted mean drift is $M = -0.035$ for $\F_1\to\F_2$ and
$M = -0.030$ for $\F_2\to\F_3$.

This systematic downward revision is interpretively meaningful.  The $\F_1$
model, fitted to vital signs alone in a class-imbalanced setting, cannot
identify low-risk patients and assigns probabilities above $c^*$ to nearly
everyone.  Adding laboratory values at $\F_2$ provides some refinement, and
adding demographic features at $\F_3$ enables a more substantial downward
revision of risk for patients who are younger, lighter, or admitted to less
acute ICU units.  The drift is therefore not evidence of incoherence---it
reflects that early-stage models systematically overestimate risk relative to
later, more informative stages, which is the expected behaviour when later
features specifically help identify the low-risk tail.  The weighted squared
drift ($S_t = 0.0017$ and $0.0013$) confirms that the revisions are
consistent rather than erratic.

\begin{figure}[htbp]
\centering
\includegraphics[width=0.95\textwidth]{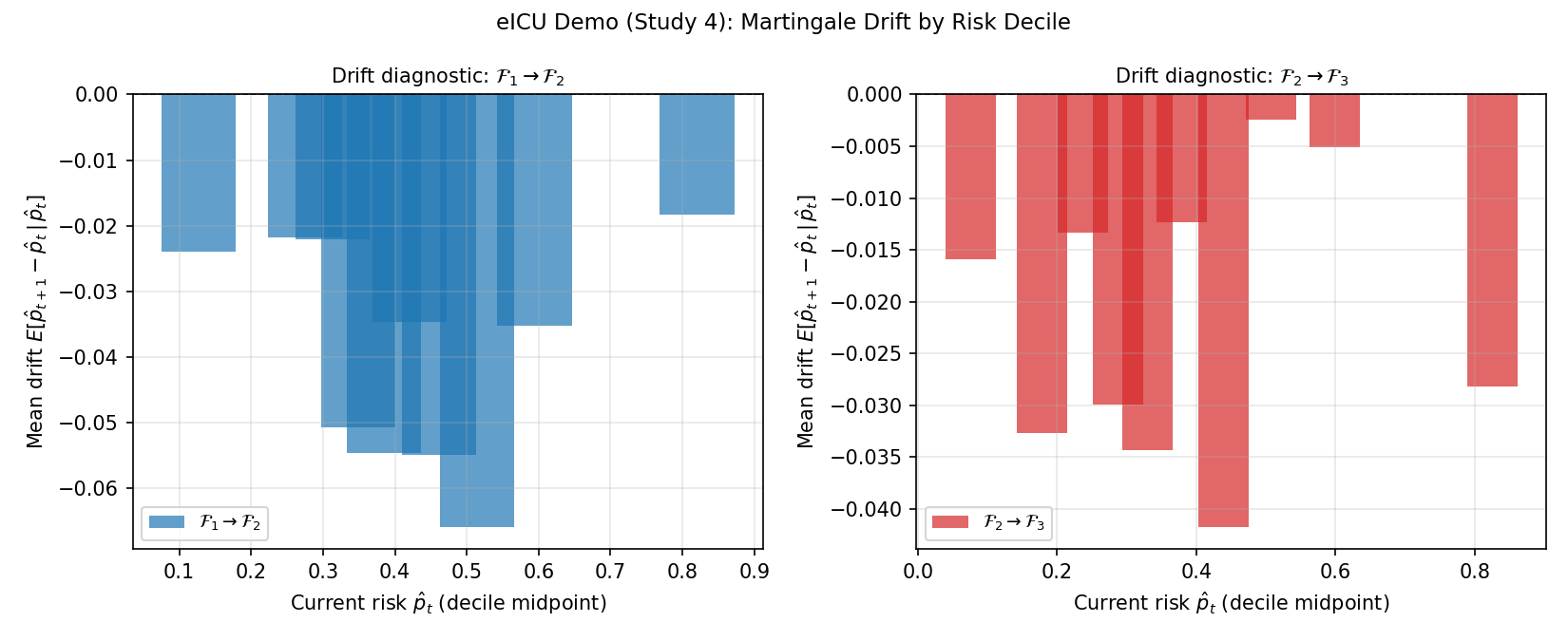}
\caption{eICU Demo (Study~4): martingale drift diagnostic by risk decile.
Each bar shows $E[\hat{p}_{t+1}-\hat{p}_t \,|\, \hat{p}_t \in \text{bin}_k]$
for ten quantile bins.  Both transitions show consistent negative drift across
all deciles, reflecting that adding laboratory values and especially
demographic features systematically reduces predicted risk.  This is
interpretively coherent: the early-stage model flags nearly everyone under the
low threshold $c^*\approx 0.091$, and later stages provide information that
correctly identifies lower-risk patients.}
\label{fig:eicu_drift}
\end{figure}



\section{Code and Reproducibility}
\label{sec:S3}

All analysis scripts are in the \texttt{scripts/} subfolder:

\begin{center}
\renewcommand{\arraystretch}{1.2}
\begin{tabular}{ll}
\toprule
Script & Purpose \\
\midrule
\texttt{run\_bcw\_analysis.py}      & Study~1 (Breast Cancer Wisconsin) \\
\texttt{run\_heart\_analysis.py}    & Study~2 (Cleveland Heart Disease) \\
\texttt{run\_diabetes\_analysis.py} & Study~3 (Pima Diabetes) \\
\texttt{run\_eicu\_analysis.py}     & Study~4 (eICU Demo, 1{,}000 reps) \\
\bottomrule
\end{tabular}
\end{center}

Dependencies: \texttt{numpy}, \texttt{pandas}, \texttt{scikit-learn},
\texttt{matplotlib}.
Install via: \texttt{pip install numpy pandas scikit-learn matplotlib}.

\section{Sensitivity Analysis: Stage-Cost Robustness}
\label{sec:S4}

\subsection{Stage-cost sensitivity for Studies 2 and 4}

The main text uses a single illustrative stage-cost schedule for each study.
Table~\ref{tab:sensitivity} reports the preferred stopping stage for
Study~2 (Cleveland Heart Disease) and Study~4 (eICU) under a range of
alternative cumulative stage-cost schedules $(0, c_2, c_3)$, holding the
loss ratio and decision losses fixed at their reported values.

The stopping conclusion for Study~2 ($\F_2$ preferred) is robust to
moderate cost increases, switching to $\F_1$ only when the cost of
moving to $\F_2$ alone exceeds $\approx 0.062$ --- larger than
any plausible exercise-test burden relative to the decision scale.
The stopping conclusion for Study~4 ($\F_3$ preferred) is even more
robust: the large decision-loss reduction from $\F_2$ to $\F_3$
($\Delta\hat{\ell} = 0.095$) means that the stage-$3$ cost must exceed
$0.10$ before $\F_1$ becomes preferred.

\subsection{eICU alternative stage ordering}

As noted in Sections~\ref{sec:datasets} and~\ref{sec:disc_limits}, the
analytic staging of Study~4 places demographics at $\F_3$ for predictive
convenience.  Under the alternative ordering (demographics$+$vitals at
$\F_1$, $+$labs at $\F_2$, full at $\F_3$), AUC at $\F_1$ rises to
approximately $0.74$--$0.75$ and specificity improves substantially; the
AUC at $\F_3$ is unchanged.  The preferred stopping stage under this
ordering is still $\F_3$: the full feature set minimises total expected
cost, and the first-stage decision quality, while better than in the
original staging, does not reach the level of the full model.  This
confirms that the main qualitative finding for Study~4 is not an artefact
of the analytic stage convention.
\begin{table}[ht]
\centering
\caption{Stage-cost sensitivity: preferred stopping stage under alternative
cumulative cost schedules $(0, c_2, c_3)$.  Decision losses are held at
their reported values (Study~2: $\hat{\ell} = 0.486, 0.424, 0.416$;
Study~4: $\hat{\ell} = 0.918, 0.911, 0.816$).
$\dagger$ marks the minimum total cost at each row.
Baseline schedule marked with $*$.}
\label{tab:sensitivity}
\renewcommand{\arraystretch}{1.2}
\begin{tabular}{lrrrrrl}
\toprule
Study & $c_2$ & $c_3$ & Total$(\F_1)$ & Total$(\F_2)$ & Total$(\F_3)$ & Preferred \\
\midrule
\multicolumn{7}{l}{\textit{Study~2 --- Cleveland Heart Disease
  ($\hat{\ell}_1=0.486,\;\hat{\ell}_2=0.424,\;\hat{\ell}_3=0.416$)}} \\
  & 0.01 & 0.03 & 0.486 & 0.434$^\dagger$ & 0.446 & $\F_2$ \\
  & 0.02 & 0.06$^*$ & 0.486 & 0.444$^\dagger$ & 0.476 & $\F_2$ \\
  & 0.04 & 0.08 & 0.486 & 0.464$^\dagger$ & 0.496 & $\F_2$ \\
  & 0.06 & 0.12 & 0.486 & 0.484$^\dagger$ & 0.536 & $\F_2$ \\
  & 0.07 & 0.14 & 0.486$^\dagger$ & 0.494 & 0.556 & $\F_1$ \\
\addlinespace
\multicolumn{7}{l}{\textit{Study~4 --- eICU Demo
  ($\hat{\ell}_1=0.918,\;\hat{\ell}_2=0.911,\;\hat{\ell}_3=0.816$)}} \\
  & 0.005 & 0.01 & 0.918 & 0.916 & 0.826$^\dagger$ & $\F_3$ \\
  & 0.010 & 0.03$^*$ & 0.918 & 0.921 & 0.846$^\dagger$ & $\F_3$ \\
  & 0.010 & 0.05 & 0.918 & 0.921 & 0.866$^\dagger$ & $\F_3$ \\
  & 0.010 & 0.08 & 0.918 & 0.921 & 0.896$^\dagger$ & $\F_3$ \\
  & 0.010 & 0.10 & 0.918 & 0.921 & 0.916$^\dagger$ & $\F_3$ \\
  & 0.010 & 0.11 & 0.918$^\dagger$ & 0.921 & 0.926 & $\F_1$ \\
\bottomrule
\end{tabular}
\end{table}

\section{Calibration Assessment}
\label{sec:S5}

Calibration of the stagewise risk estimates was assessed by fitting a
logistic recalibration model on the test-set predictions: the observed
binary outcome was regressed on the log-odds of the stage-$t$ predicted
probability across each of the 1{,}000 repeated splits, and the mean
calibration slope and intercept were recorded.  A well-calibrated model
has slope $\approx 1$ and intercept $\approx 0$.

\begin{table}[htbp]
\centering
\caption{Stagewise calibration: mean (SD) of calibration slope and intercept
over 1{,}000 repeated 70/30 stratified splits, with 2.5/97.5 percentile
winsorisation before averaging.  A slope of 1 and intercept of 0 indicate
perfect calibration.
$^*$~Near-perfect separation at $\F_3$ (AUC $\approx$ 0.995) renders logistic
recalibration unreliable; slope estimates are inflated and have high variance.
$^\dagger$~The eICU demo has 8.6\% in-hospital mortality; severe class
imbalance combined with the low threshold $c^*\approx 0.091$ produces
near-separation in early stages, causing logistic recalibration slopes and
intercepts to be unstable with high variance.  Results are reported for
completeness.}
\label{tab:calibration}
\renewcommand{\arraystretch}{1.15}
\begin{tabular}{llrrrr}
\toprule
Study & Stage & Slope (mean) & Slope (SD) & Intercept (mean) & Intercept (SD) \\
\midrule
BCW (Study~1)
  & $\F_1$ & 1.22 & 0.23 & 0.03 & 0.42 \\
  & $\F_2$ & 1.23 & 0.25 & 0.10 & 0.45 \\
  & $\F_3$$^*$ & 1.93 & 1.93 & 0.27 & 0.77 \\
\addlinespace
Cleveland (Study~2)
  & $\F_1$ & 0.98 & 0.27 & 0.01 & 0.16 \\
  & $\F_2$ & 0.89 & 0.21 & $-0.01$ & 0.20 \\
  & $\F_3$ & 0.90 & 0.21 & $-0.01$ & 0.26 \\
\addlinespace
Pima (Study~3)
  & $\F_1$ & 1.02 & 0.18 & 0.01 & 0.14 \\
  & $\F_2$ & 1.00 & 0.18 & 0.00 & 0.14 \\
  & $\F_3$ & 0.97 & 0.17 & $-0.01$ & 0.14 \\
\addlinespace
eICU (Study~4)$^\dagger$
  & $\F_1$ & 9.55 & 26.52 & 1.95 & 19.81 \\
  & $\F_2$ & 8.89 & 16.69 & 1.74 & 20.23 \\
  & $\F_3$ & 9.28 & 20.08 & 4.26 & 14.92 \\
\bottomrule
\end{tabular}
\end{table}

Calibration slopes are expected to be close to 1 for $\ell_2$-regularised
logistic regression on moderately sized samples, as regularisation
shrinks predicted probabilities toward the prevalence and the recalibration
regression corrects for this shrinkage on the test set.  Any systematic
miscalibration across stages would threaten the martingale interpretation
(since the framework requires $\hat{X}_t \approx \Pbb(D=1\mid\F_t)$) and
should be noted.  For Studies~1--3, slopes are close to 1 and intercepts
close to 0, indicating adequate calibration of the $\ell_2$-regularised
logistic regression models.  The exceptions ($\F_3$ of Study~1 and
Study~4) are explained by the footnotes to Table~\ref{tab:calibration}.

\end{document}